\definecolor{darkblue}{rgb}{0,0,.5}
\definecolor{darkred}{rgb}{.5,0,0}
\definecolor{darkgreen}{rgb}{0,0.5,0}
\numberwithin{equation}{section}
\newtheorem{proposition}{Proposition}[section]
\newcommand {\bil}[2]{(#1 \mid #2)}
\newcommand {\ldbr}{[\![}
\newcommand {\rdbr}{]\!]}
\newcommand {\bbC}{\mathbb C}
\newcommand {\bbE}{\mathbb E}
\newcommand {\bbZ}{\mathbb Z}
\newcommand{\calC}{\mathcal C}
\newcommand {\gothg}{\mathfrak g}
\newcommand {\gothk}{\mathfrak k}
\newcommand {\glmn}{\mathfrak{gl}_{M | N}}
\newcommand {\lslm}{\mathcal L(\mathfrak{sl}_M)}
\newcommand {\lslthree}{\mathcal L(\mathfrak{sl}_3)}
\newcommand {\lsltwo}{\mathcal L(\mathfrak{sl}_2)}
\newcommand {\oo}{{\overline 1}}
\newcommand {\oz}{{\overline 0}}
\newcommand {\uqg}{\mathrm U_q(\mathfrak{g})}
\newcommand {\uqglmn}{\mathrm U_q(\mathfrak{gl}_{M | N})}
\newcommand {\uqglm}{\mathrm U_q(\mathfrak{gl}_M)}
\newcommand {\uqlslm}{\mathrm U_q(\mathcal{L}(\mathfrak{sl}_M))}
\newcommand {\uqlslmn}{\mathrm U_q(\mathcal{L}(\mathfrak{sl}_{M | N}))}
\title[On Poincar\'e--Birkhoff--Witt basis of quantum general linear superalgebra]{On Poincar\'e--Birkhoff--Witt basis\\ of quantum general linear superalgebra}
\author[A. V. Razumov]{Alexander V. Razumov}
\address{Institute for High Energy Physics, NRC ``Kurchatov Institute", 142281 Protvino, Mos\-cow region, Russia}
\email{Alexander.Razumov@ihep.ru}
\begin{document}

\addtolength {\jot}{3pt}

\begin{abstract}
We give a detailed derivation of the commutation relations for the Poincar\'e--Birkhoff--Witt generators of the quantum superalgebra $\mathrm U_q(\mathfrak{gl}_{M|N})$.
\end{abstract}

\maketitle

\tableofcontents

\section{Introduction}

The functional relations are an effective method for investigation of quantum integrable systems. To derive them it is convenient to use the quantum algebraic approach. Previously, what we call quantum algebra was usually called quantum group. In fact, this object is an associative algebra, which in a sense is a deformation of the universal enveloping algebra of a Lie algebra. Nowadays, the term quantum algebra is more commonly used, and we adhere to this terminology. The general notion of a quantum algebra $\uqg$, used in the present paper, was proposed by Drinfeld and Jimbo \cite{Dri85, Jim85} for the case when $\gothg$ is a Kac--Moody algebra with a symmetrizable generalized Cartan matrix.

The derivation of the functional relations based on the quantum algebraic approach was given in the papers \cite{BazLukZam99, BooGoeKluNirRaz13, BooGoeKluNirRaz14a, NirRaz16a, Raz21} for the loop Lie algebra $\gothg = \lsltwo$, in the papers \cite{BazHibKho02, BooGoeKluNirRaz14b, Raz21} for $\gothg = \lslthree$, and in the paper \cite{Raz21a} we gave the derivation for $\gothg = \lslm$ with an arbitrary $M$.\footnote{See also the paper \cite{Koj08}, where some functional relations for $\gothg = \lslm$ were presented without derivation.} The derivation of the functional relations given in the papers \cite{Raz21, Raz21a} is based on the results of the papers \cite{NirRaz17b, BooGoeKluNirRaz16, BooGoeKluNirRaz17b}. In the paper \cite{NirRaz17b}, using the commutation relations for the Poincar\'e--Birkhoff--Witt generators of the quantum algebra $\uqglm$ presented in the paper \cite{Yam89}, we found their action in the Verma $\uqglm$-module. Using some limiting procedure, we found a set of $q$-oscillator modules over the positive Borel subalgebra of $\uqglm$. This modules, via Jimbo's homomorphism were used to construct the corresponding modules over the positive Borel subalgebra of  $\uqlslm$, which are used to construct $Q$-operators.\footnote{For the terminology used for integrability objects, we refer to the papers \cite{BooGoeKluNirRaz14a, Raz21, Raz21a}.} Finally, we derived the corresponding functional relations in the paper \cite{Raz21a}. Here, to analyze the tensor products of the $q$-oscillator modules, we used their $\ell$-weights found in the papers \cite{BooGoeKluNirRaz16, BooGoeKluNirRaz17b}.

By generalizing the defining relations of quantum algebra appropriately, one arrives at quantum algebras associated with the Lie superalgebras \cite{Yam94}. It would interesting to generalize the procedure of constructing the functional relations to the case of quantum superalgebras.\footnote{The first results in this direction were obtained in the paper \cite{BazTsu08}.} It seems that the right choice is to start with the quantum superalgebra $\uqlslmn$. Here the very first step should be derivation of the commutation relations for the Poincar\'e--Birkhoff--Witt generators of the quantum algebra $\uqglmn$. Actually, the commutation relations for this case already were presented in the papers \cite{Zha93, Tsu14, Tsu19} without proof. There is some disagreement between these papers. This fact prompted us to rederive the results of the papers \cite{Zha93, Tsu14, Tsu19}.

The structure of the paper is as follows. In section 2 we remind the necessary facts on the Lie superalgebra $\glmn$.  In section 3 we define the quantum superalgebra $\uqglmn$. The detailed proof of the commutation relation is given in section 4.

We fix the deformation parameter $\hbar$ in such a way that $q = \exp(\hbar)$ is not a root of unity and assume that
\begin{equation*}
q^\nu = \exp (\hbar \nu)
\end{equation*}
for any $\nu \in \bbC$. We define $q$-numbers by the equation
\begin{equation*}
[\nu]_q = \frac{q^\nu - q^{- \nu}}{q - q^{-1}}, \qquad \nu \in \bbC
\end{equation*}

\section{\texorpdfstring{Lie superalgebra $\glmn$}{Lie superalgebra glM|N}} \label{s:2}

We fix two positive integers $M$ and $N$ such that $M, N \ge 1$ and $M \ne N$, and denote by $\bbC_{M | N}$ the superspace\footnote{See appendix A of the paper \cite{Raz22} for a minimal set of definitions and notation.} formed by $(M + N)$-tuples of complex numbers with the following grading. An element of $\bbC_{M | N}$ is even if its last $N$ components are zero, and odd if its first $M$ components are zero. For simplicity, we denote the Lie superalgebra $\mathfrak{gl}(\bbC_{M | N})$ as $\glmn$. We denote by $v_i$, $i = 1, \ldots, M + N$, the elements of the standard basis of $\bbC_{M | N}$. By definition,
\begin{equation*}
[v_i] = \oz, \quad i = 1, \ldots, M, \qquad [v_i] = \oo, \quad i = M + 1, \ldots, N.
\end{equation*}
It is convenient to use the notation
\begin{equation*}
[i] = [v_i], \qquad i = 1, \ldots, M + N.
\end{equation*}
The elements $\bbE_{i j} \in \glmn$, $i, j = 1, \ldots, M + N$, defined by the equation
\begin{equation*}
\bbE_{i j} v_k = v_i \delta_{j k},
\end{equation*}
form a basis of the Lie superalgebra $\glmn$. It is clear that the matrices of $\bbE_{i j}$ with respect to the standard basis of $\bbC_{M | N}$ are the usual matrix units, and we have
\begin{equation*}
\bbE_{i j} \bbE_{k l} = \delta_{j k} \bbE_{i l}.
\end{equation*}
It is also evident that
\begin{equation*}
[\bbE_{i j}] = [i] + [j].
\end{equation*}

As the Cartan subalgebra $\gothk$ of the Lie superalgebra $\glmn$ we take the subalgebra span\-ned by the elements $K_i = \bbE_{i i}$, $i = 1, \ldots, M + N$, which form its basis. Denote by $(\Xi_i)_{i = 1, \ldots, M + N}$ the dual basis of the space $\gothk^*$. For $X = \sum_{i = 1}^{M + N} c_i K_i \in \gothk$ we have
\begin{equation*}
[X, \, \bbE_{i j}] = (c_i - c_j) \, \bbE_{i j} = \langle\Xi_i -\Xi_j, \, X \rangle \, \bbE_{i j}.
\end{equation*}
Hence, $\bbE_{i j}$, $i \ne j$, is a root vector corresponding to the root $\Xi_i -\Xi_j$ and the root system of $\glmn$ is the set
\begin{equation*}
\Delta = \{\Xi_i -\Xi_j \mid i, j = 1, \ldots, M + N, \ i \ne j\}.
\end{equation*}
We choose as the system of simple roots the set
\begin{equation*}
\Pi = \{\Xi_i -\Xi_{i + 1} \mid i = 1, \ldots, M + N - 1\},
\end{equation*}
then the system of positive roots corresponding to $\Pi$ is
\begin{equation*}
\Delta_+ = \{\alpha_{i j} = \Xi_i -\Xi_j \mid 1 \le i < j \le M + N\}.
\end{equation*}
Certainly, the corresponding system of negative roots is $\Delta_- = -\Delta_+$. Denoting
\begin{equation*}
\alpha_i =  \alpha_{i, \, i + 1} =\Xi_i -\Xi_{i + 1}, \qquad i = 1, \ldots, M + N - 1,
\end{equation*}
we obtain
\begin{equation*}
\alpha_{i j} =  \sum_{k = 1}^{j - 1} \alpha_k, \qquad 1 \le i < j \le M + N.
\end{equation*}
We define a strict partial order $\prec$ on $\gothk^*$ as follows. Given $\alpha, \beta \in \gothk^*$, we assume that $\beta \prec \alpha$ if and only if $\alpha - \beta$ is the sum of positive roots.

Define a nondegenerate symmetric bilinear form $\bil {\cdot}{\cdot}$ on $\gothk^*$ by the equation
\begin{equation*}
\bil {\Xi_i} {\Xi_j} = (-1)^{[i]} \delta_{i j} = d_i \delta_{i j},
\end{equation*}
where
\begin{equation*}
d_i = (-1)^{[i]}.
\end{equation*}
We see that
\begin{equation*}
\bil {\alpha_{i j}} {\alpha_{m n}} = d_i \delta_{i m} - d_j \delta_{j m} - d_i \delta_{i n} + d_j \delta_{j n},
\end{equation*}
Below we often use the relations
\begin{gather}
\bil {\alpha_{i j}}{\alpha_{j n}} = -d_j, \qquad \bil {\alpha_{i j}}{\alpha_{m i}} = -d_i, \label{bfa} \\*
\bil {\alpha_{i j}}{\alpha_{i n}} = d_i, \quad j \ne n, \qquad \bil {\alpha_{i j}}{\alpha_{i j}} = d_i + d_j, \qquad \bil {\alpha_{i j}}{\alpha_{m j}} = d_j, \quad i \ne m. \label{bfb}
\end{gather}
In fact, these are all nonzero cases.

\section{\texorpdfstring{Quantum superalgebra $\uqglmn$}{Quantum superalgebra Uq(glm|n)}} \label{s:3}

We define the quantum superalgebra $\uqglmn$ as a unital associative $\bbC$-superalgebra generated by the elements\footnote{We use capital letters to distinguish between generators of the quantum superalgebra $\uqglmn$ and the quantum superalgebra $\uqlslmn$.}
\begin{equation*}
E_i, \quad F_i, \quad i = 1, \ldots, M + N - 1, \qquad q^X, \quad X \in \gothk,
\end{equation*}
which obey the corresponding defining relations. The $\bbZ_2$-grading of the quantum superalgebra $\uqglmn$ is defined on generators as
\begin{equation*}
[q^X] = \oz, \qquad [E_i] = [F_i] = \left\{ \begin{array}{ll}
\oz, & i \ne M, \\[.5em]
\oo, & i = M.
\end{array} \right.
\end{equation*}

Before giving the explicit form of the defining relations, introduce the notion of the $q$-supercommutator. The abelian group
\begin{equation*}
Q = \bigoplus_{i = 1}^{M + N - 1} \bbZ \, \alpha_i.
\end{equation*}
is called the root lattice of the Lie superalgebra $\glmn$. Assuming that
\begin{equation*}
q^X \in \uqglmn_0, \qquad E_i \in \uqglmn_{\alpha_i}, \qquad F_i \in \uqglmn_{-\alpha_i},
\end{equation*}
we endow $\uqglmn$ with a $Q$-grading. Now, for any elements $X \in \uqglmn_\alpha$ and $Y \in \uqglmn_\beta$ we define the $q$-supercommutator by the equation
\begin{align*}
& \ldbr X, \, Y \rdbr =  X Y - (-1)^{[X] [Y]} q^{- \bil \alpha \beta} Y X = X Y - (-1)^{[\alpha] [\beta]} q^{- \bil \alpha \beta} Y X 
\intertext{if $\alpha, \beta \succ 0$, by the equation}
& \ldbr X, \, Y \rdbr =  X Y - (-1)^{[X] [Y]} q^{\bil \alpha \beta} Y X = X Y - (-1)^{[\alpha] [\beta]} q^{\bil \alpha \beta} X Y 
\intertext{if $\alpha, \beta \prec 0$, and by the equation}
& \ldbr X, \, Y \rdbr =  X Y - (-1)^{[X] [Y]} Y X = X Y - (-1)^{[\alpha] [\beta]} Y X
\end{align*}
if $\alpha \prec 0$ and $\beta \succ 0$, or $\alpha \succ 0$ and $\beta \prec 0$.

The defining relations of the quantum superalgebra $\uqglmn$ have the form \cite{Yam94}
\begin{gather}
q^0 = 1, \qquad q^{X_1} q^{X_2} = q^{X_1 + X_2}, \label{glra} \\
q^X E_i q^{-X} = q^{\langle \alpha_i, \, X \rangle} E_i, \qquad q^X F_i q^{-X} = q^{-\langle \alpha_i, \, X \rangle} F_i, \label{glrb} \\
\ldbr E_i, \, F_j \rdbr = \delta_{i j} \frac{q^{d_i K_i - d_{i + 1} K_{i + 1}} -q^{- d_i K_i + d_{i + 1} K_{i + 1}}}{q_i^{} - q_i^{- 1}},  \label{glrc}
\end{gather}
where $i, j = 1, \ldots, M + N - 1$. Here and below we use the notation
\begin{equation*}
q_i =  q^{d_i} = q^{(-1)^{[i]}}.
\end{equation*}
It is useful to have in mind that
\begin{equation*}
[2]_{q^i} = q_i^{} + q_i^{-1} = q + q^{-1} = [2]_q
\end{equation*}
and
\begin{equation}
(q^{}_i - q^{-1}_i) = d_i (q - q^{-1}) = (-1)^{[i]} (q - q^{-1}). \label{kappaqi}
\end{equation}
There are also the following Serre relations
\begin{gather}
\ldbr E_i, \, E_j \rdbr = 0, \qquad  \ldbr F_j, \, F_i \rdbr = 0, \qquad \bil {\alpha_i} {\alpha_j} = 0,  \label{glsra} \\
\ldbr \ldbr E_{i - 1}, \, E_i \rdbr, \, E_i \rdbr = 0, \qquad \ldbr F_i, \ldbr \, F_i , \, F_{i - 1} \rdbr \rdbr = 0, \qquad \bil {\alpha_i} {\alpha_i} \ne 0, \label{glsrcm} \\
\ldbr E_i, \, \ldbr E_i, \, E_{i + 1} \rdbr \rdbr = 0, \qquad \ldbr \ldbr F_{i + 1}, \, F_i \rdbr , \, F_i \rdbr = 0, \qquad \bil {\alpha_i} {\alpha_i} \ne 0, \label{glsrcp} \\
\hspace{2em} \ldbr \ldbr \ldbr E_{M - 1}, \, E_M \rdbr, \, E_{M + 1} \rdbr, \, E_M \rdbr = 0, \qquad \ldbr F_M, \, \ldbr F_{M + 1}, \, \ldbr F_M, \, F_{M - 1} \rdbr \rdbr \rdbr = 0.  \label{glsre}
\end{gather}

Let us rewrite the defining relations (\ref{glsra})--(\ref{glsrcp}) in a more familiar form. The relations (\ref{glsra}) are equivalent to the equations
\begin{gather}
E_i E_j = E_j E_i, \qquad F_j F_i = F_i F_j. \qquad |i - j| > 1, \label{imjgo} \\
E_M^2 = 0, \qquad F_M^2 = 0,
\end{gather}
and the relations (\ref{glsrcm})--(\ref{glsrcp}) are equivalent to
\begin{align}
& E^{}_{i - 1} E_i^2  - [2]_q E^{}_i E^{}_{i - 1} E^{}_i + E_i^2 E^{}_{i - 1} = 0, &&
F_i^2 F^{}_{i - 1} - [2]_q F^{}_i F^{}_{i - 1} F^{}_i + F_{i - 1} F_i^2 = 0, \label{eesm} \\*
& E_i^2 E^{}_{i + 1} - [2]_q E^{}_i E^{}_{i + 1} E^{}_i + E^{}_{i + 1} E_i^2 = 0, &&
F^{}_{i + 1} F_i^2 - [2]_q F^{}_i F^{}_{i + 1} F^{}_i +  F_i^2 F^{}_{i + 1} = 0.\label{esep}
\end{align}
where $i \ne M$.

\section{\texorpdfstring{Poincar\'e--Birkhoff--Witt basis of $\uqglmn$}{Poincare-Birkhoff-Witt basis of Uq(gl(M|N))}}

An element $a$ of $\uqglmn$ is called a root vector corresponding to a root $\gamma$ of $\glmn$ if $a \in \uqglmn_\gamma$. In particular, $E_i$ and $F_i$ are root vectors corresponding to the roots $\alpha_i$ and $- \alpha_i$. It is possible to  construct linearly independent root vectors corresponding to all roots of $\glmn$. To this end, being inspired by M.~Jimbo \cite{Jim86a}, we introduce elements $E_{ij}$ and $F_{ij}$, $1 \le i < j \le M + N$, with the help of the relations
\begin{gather}
E_{i, \, i + 1} = E_i, \qquad F_{i, \, i + 1} = F_i, \label{41a} \\
E_{i, \, j + 1} = \ldbr  E_{i j}, \, E_{j, \, j + 1} \rdbr, \qquad F_{i, \, j + 1} = \ldbr  F_{j, \,  j + 1}, \, F_{i j} \rdbr, \qquad j > i. \label{41b}
\end{gather} 
Explicitly, the last two equations look as
\begin{equation*}
E_{i, \, j + 1} = E_{i j} \, E_{j, \, j + 1} - q_j \, E_{j, \, j + 1} \, E_{i j}, \qquad F_{i, j + 1 } = F_{j, \, j + 1} \, F_{i j} - q_j^{-1} F_{i j} \, F_{j, \, j + 1}.
\end{equation*}
Note that we have
\begin{equation*}
[E_{i j}] = [i] + [j],
\end{equation*}
in particular,
\begin{equation*}
[E_i] = [F_i] = [i] + [i + 1].
\end{equation*}
We also see that
\begin{equation*}
[E_{i j}] = \oz
\end{equation*}
if and only if $j < M$ or $i > M$,
\begin{equation*}
[E_{i j}] = \oo
\end{equation*}
if and only if $i \le M < j$.

It is clear that the vectors $E_{i j}$ and $F_{i j}$ correspond to the roots $\alpha_{i j}$ and $- \alpha_{i j}$ respectively. These vectors are linearly independent, and together with the elements $q^X$, $X \in \gothk$, are called Cartan--Weyl generators of $\uqglmn$. It appears that the ordered monomials constructed from the Cartan--Weyl generators form a Poincar\'e--Birkhoff--Witt basis of $\uqglmn$. In this paper we choose the following total order for monomials. First, we endow the set of the pairs $(i, \, j)$, where $1 \le i < j \le M + N$, with the lexicographical order. It means that $(i, \, j) \prec (m, \, n)$ if $i < m$, or if $i = m$ and $j < n$.\footnote{Note that if we define an ordering of positive roots so that $\alpha_{i j} \prec \alpha_{m n}$ if $(i, \, j) \prec (m, \, n)$ we get a normal ordering in the sense of \cite{LezSav74, AshSmiTol79}, see also \cite{Tol89}.} Now we say that a monomial is ordered if it has the form
\begin{equation}
F_{i_1 j_1} \cdots F_{i_r j_r} \, q^X \, E_{m_1 n_1} \ldots E_{m_s n_s}, \label{41c}
\end{equation}
where $(i_1, \, j_1) \preccurlyeq \cdots \preccurlyeq (i_r, \, j_r)$, $(m_1, \, n_1) \preccurlyeq \cdots \preccurlyeq (m_s, \, n_s)$ and $X$ is an arbitrary element of~$\gothk$. In the present paper we only show that any monomial can be written as a finite sum of monomials of the form (\ref{41c}). To prove that they form a basis of $\uqglmn$ one can use arguments similar to those used in the paper \cite{Yam89} for the the case of the quantum algebra $\uqglm$.

We present the relations necessary for ordering as a sequence of propositions. First consider the ordering of $q^X$ with $E_{i j}$ and $F_{i j}$.

\begin{proposition} \label{p:4.1}
For any $1 \le j < n \le M + N$ and $i = 1, \ldots, M + N$, we have
\begin{equation}
q^{\nu K_i} E_{j n} \, q^{- \nu K_i} = (q^\nu \delta_{i j} + q^{-\nu} \delta_{i n}) E_{j n}, \qquad q^{\nu K_i} F_{j n} \, q^{- \nu K_i} = (q^{-\nu} \delta_{i j} + q^\nu \delta_{i k}) E_{j n}. \label{e41a}
\end{equation}
\end{proposition}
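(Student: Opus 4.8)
The plan is to prove both identities by induction on $n - j$, using the recursive definition (\ref{41a})--(\ref{41b}) together with the fact that, for every $X \in \gothk$, the map $a \mapsto q^X a \, q^{-X}$ is an automorphism of the algebra $\uqglmn$.

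First I would record the structural fact that drives everything: by induction on $n-j$ one checks that $E_{jn}$ is homogeneous of $Q$-degree $\alpha_{jn}$ and $F_{jn}$ of $Q$-degree $-\alpha_{jn}$. Indeed $E_{j,j+1}=E_j$ lies in $\uqglmn_{\alpha_j}$, and if $E_{jn}\in\uqglmn_{\alpha_{jn}}$ then the relation $E_{j,n+1}=\ldbr E_{jn},E_{n,n+1}\rdbr$ produces an element of degree $\alpha_{jn}+\alpha_n=\alpha_{j,n+1}$, since both summands $E_{jn}E_n$ and $E_nE_{jn}$ carry that degree. Because (\ref{glrb}) says precisely that $q^X$ conjugates the degree-$\alpha_i$ generator $E_i$ by $q^{\langle\alpha_i,X\rangle}$, and conjugation is multiplicative, it follows that every homogeneous $a\in\uqglmn_\beta$ obeys $q^X a\,q^{-X}=q^{\langle\beta,X\rangle}a$.

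The proposition is then the special case $\beta=\pm\alpha_{jn}$, $X=\nu K_i$. Here $\langle\alpha_{jn},K_i\rangle=\langle\Xi_j-\Xi_n,K_i\rangle=\delta_{ij}-\delta_{in}$, so $q^{\nu K_i}E_{jn}\,q^{-\nu K_i}=q^{\nu(\delta_{ij}-\delta_{in})}E_{jn}$, which equals $q^\nu E_{jn}$ for $i=j$, $q^{-\nu}E_{jn}$ for $i=n$, and $E_{jn}$ otherwise; the $F$ identity is the case $\beta=-\alpha_{jn}$, i.e. $\nu\mapsto-\nu$. If one prefers an entirely hands-on argument avoiding grading language, the same induction runs directly on the explicit form $E_{j,n+1}=E_{jn}E_n-q_nE_nE_{jn}$: applying the conjugation and the inductive hypothesis rescales both $E_{jn}E_n$ and $E_nE_{jn}$ by the \emph{common} factor $q^{\nu(\delta_{ij}-\delta_{in})}\,q^{\nu(\delta_{in}-\delta_{i,n+1})}$, whose intermediate contribution $\delta_{in}$ telescopes away to leave $q^{\nu(\delta_{ij}-\delta_{i,n+1})}$, exactly the scalar attached to the root $\alpha_{j,n+1}$.

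I do not expect a genuine obstacle, since the assertion merely expresses that $E_{jn}$ and $F_{jn}$ are root vectors for $\pm\alpha_{jn}$, a property built into their construction as iterated $q$-supercommutators. The only point demanding care is that, as both $\alpha_{jn}$ and $\alpha_n$ are positive roots, the sign $(-1)^{[\alpha_{jn}][\alpha_n]}$ and the factor $q^{-\bil{\alpha_{jn}}{\alpha_n}}$ appearing in the $q$-supercommutator are fixed constants of the defining expression; they pass through the conjugation unchanged, so the whole verification reduces to the clean telescoping of the Kronecker-delta exponents noted above.
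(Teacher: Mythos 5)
Your proof is correct and follows essentially the same route as the paper: the paper likewise reduces the claim to the relation $q^{\nu K_i}E_m q^{-\nu K_i}=q^{\nu(\delta_{im}-\delta_{i,m+1})}E_m$ for the simple generators and then telescopes the exponents $\sum_{m=j}^{n-1}(\delta_{im}-\delta_{i,m+1})=\delta_{ij}-\delta_{in}$ over the iterated $q$-supercommutator defining $E_{jn}$. Your framing via the $Q$-grading is just a slightly more formal packaging of the same automorphism-plus-telescoping argument.
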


\begin{proof}
It is evident that
\begin{equation*}
\langle \alpha_j, \, K_i \rangle = \delta_{i j} - \delta_{i, \, j + 1},
\end{equation*}
and it follows from the defining relation (\ref{glrb}) that
\begin{equation*}
q_i^{\nu K_i} E_j^{} \, q_i^{- \nu K_i} = q^{\nu (\delta_{i j} - \delta_{i, \, j + 1})} E_j^{}, \qquad q_i^{\nu K_i} F_j^{} \, q_i^{- \nu K_i} = q^{-\nu (\delta_{i j} - \delta_{i, \, j + 1})} F_j^{}.
\end{equation*}
which follow from (\ref{glrb}).

\begin{equation*}
q^{\nu K_i} E_{j n} q^{-\nu K_i} = q^{\nu \sum_{m = j}^{n - 1} (\delta_{i m} - \delta_{i, \, m + 1})} E_{j n} = q^{\nu (\delta_{i j} - \delta_{i n})} E_{j n} = (q^\nu \delta_{i j} + q^{-\nu} \delta_{i n}) E_{j n}
\end{equation*}
Thus, the first equation of (\ref{e41a}) is true. The proof of the second equations is similar.
\end{proof}

Now we consider the ordering of the root vectors $E_{i j}$, $1 \le i < j \le M + N$,  and $F_{i j}$, $1 \le i < j \le M + N$. We divide the set of pairs $((i, \, j), \, (m, \, n))$, where $1 \le i < j \le M + N$, $1 \le m < n \le M + N$ and $(i, \, j) \prec (m, \, n)$, into six branches $\calC_a$, $a = \mathrm I, \ldots, \mathrm{VI}$. The conditions defining the branches are given in table \ref{t:1}.
\begin{table}[h]
\begin{equation*}
\begin{array}{llcc}
\hline \hline \\[-.8em]
& & ([i] + [j])([m] + [n]) & \bil{\alpha_{i j}}{\alpha_{m n}} \\[.2em]
\hline \\[-.8em]
\calC_{\mathrm{I}} & i = m < j < n & [i] + [j] & (-1)^{[i]} \\[.2em]
\calC_{\mathrm{II}} & i < m < n < j & [m] + [n] & 0 \\[.2em]
\calC_{\mathrm{III}} & i < m < j = n & [m] + [j] & (-1)^{[j]} \\[.2em]
\calC_{\mathrm{IV}} & i < m < j  < n & [m] + [j] & 0 \\[.2em]
\calC_{\mathrm {V}} & i < j = m < n & \oz & -(-1)^{[j]} \\[.2em]
\calC_{\mathrm{VI}} & i < j < m < n & \oz & 0 \\[.2em]
\hline
\end{array}
\end{equation*}
\caption{}  \label{t:1}
\end{table}
In the same table we put the information necessary to construct the corresponding $q$-supercommutators. To fill table~\ref{t:1} it is sufficient to use the relations
\begin{equation*}
a^2 = a, \qquad a + a = \oz, \qquad a \in \bbZ_2,
\end{equation*}
and equations (\ref{bfa}) and (\ref{bfb}).


\begin{proposition} \label{p:4.2}
For any $((i, \, j), \, (m, \, n)) \in \calC_{\mathrm{VI}}$ one has
\begin{equation}
\ldbr E_{i j}, \, E_{m n} \rdbr = E_{i j} E_{m n} - E_{m n} E_{i j} = 0, \qquad  \ldbr F_{m n}, \, F_{i j} \rdbr = F_{m n} F_{i j} - F_{i j} F_{m n} = 0. \label{e42a}
\end{equation}
\end{proposition}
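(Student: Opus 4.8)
The plan is to prove both identities by a single induction on the total length $L = (j - i) + (n - m)$ of the two root vectors, exploiting the recursive definition (\ref{41b}) together with the elementary far-apart commutation relation (\ref{imjgo}). Since branch $\calC_{\mathrm{VI}}$ is the case $i < j < m < n$, table~\ref{t:1} records that $([i]+[j])([m]+[n]) = \oz$ and $\bil{\alpha_{ij}}{\alpha_{mn}} = 0$, so the $q$-supercommutator collapses to the ordinary commutator; hence the assertion is precisely that $E_{ij}$ commutes with $E_{mn}$ (and, symmetrically, $F_{ij}$ with $F_{mn}$) whenever the index intervals are disjoint with $j < m$. I would carry out the $E$ case in detail and note that the $F$ case is entirely analogous, using $F_{i,\,j+1} = \ldbr F_{j,\,j+1},\, F_{ij} \rdbr$ in place of the $E$-recursion.

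For the base case $L = 2$ both vectors are simple, $E_{ij} = E_i$ and $E_{mn} = E_m$ with $m \ge i + 2$, so $|i - m| > 1$ and commutativity is exactly (\ref{imjgo}). For the inductive step $L \ge 3$ at least one interval has length $\ge 2$, and I would split into two cases. If $n - m \ge 2$, I expand $E_{mn} = E_{m,\,n-1}\,E_{n-1} - q_{n-1}\,E_{n-1}\,E_{m,\,n-1}$ via (\ref{41b}); the two pairs $((i,\,j),(m,\,n-1))$ and $((i,\,j),(n-1,\,n))$ again lie in $\calC_{\mathrm{VI}}$ — one checks $j < m < n-1$, whence $j < n-1$ — and both have strictly smaller total length, so the induction hypothesis lets $E_{ij}$ pass through $E_{m,\,n-1}$ and through $E_{n-1}$ alike; the scalar $q_{n-1}$ then factors out and $E_{ij} E_{mn} = E_{mn} E_{ij}$. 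If instead $n - m = 1$ but $j - i \ge 2$, I expand the first vector as $E_{ij} = E_{i,\,j-1}\,E_{j-1} - q_{j-1}\,E_{j-1}\,E_{i,\,j-1}$ and argue symmetrically, using that $E_m$ commutes with both $E_{i,\,j-1}$ and $E_{j-1}$ (here $|(j-1) - m| > 1$).

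I do not anticipate a serious obstacle: the computation in each case is a one-line rearrangement once the index bookkeeping is in place. The only point demanding genuine care is checking that every pair produced by the recursion (\ref{41b}) remains in branch $\calC_{\mathrm{VI}}$ — that disjointness of the intervals is preserved and that the total length strictly decreases — so that the induction hypothesis is legitimately applicable and every supercommutator in play is in fact an ordinary commutator. Keeping the nested inequalities straight (for instance that $n - m \ge 2$ forces $j < n - 1$, and that $n - m = 1$ with $j - i \ge 2$ gives $m - (j-1) \ge 2$) is the main thing to verify, after which both displayed relations in (\ref{e42a}) follow, and \emph{mutatis mutandis} the same scheme yields $\ldbr F_{mn},\, F_{ij} \rdbr = 0$.
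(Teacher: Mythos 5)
Your proof is correct and is essentially the argument the paper has in mind: the paper dispatches this proposition in one line as ``a direct consequence of the Serre relations (\ref{imjgo})'', and your induction on the total length $(j-i)+(n-m)$ — together with the bookkeeping that the recursion (\ref{41b}) keeps every resulting pair inside $\calC_{\mathrm{VI}}$ and that the supercommutator there is an ordinary commutator — is precisely the routine verification the paper leaves to the reader. There is no gap.
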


\begin{proof}
The statement of the proposition is a direct consequence of the Serre relations (\ref{imjgo}).
\end{proof}


\begin{proposition} \label{p:4.3}
For any $((i, \, j), \, (m, \, n)) \in \calC_{\mathrm{V}}$ one has
\begin{equation*}
\ldbr E_{i j}, \, E_{m n} \rdbr = E_{i j} E_{j n} - q_j E_{j n} E_{i j} = E_{i n}, \qquad \ldbr F_{m n}, \, F_{i j} \rdbr = F_{j n} F_{i j} - q_j^{-1} F_{i j} F_{j n} = F_{i n}.
\end{equation*}
\end{proposition}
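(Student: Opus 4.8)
The plan is to prove both identities simultaneously by induction on the upper index $n$, the two engines being a $q$-Jacobi identity for the $q$-supercommutator and the commutativity already recorded in Proposition~\ref{p:4.2}. Before starting I would check that the coefficients displayed in the statement are the correct specializations of the general rule: since $(i,\,j),\,(j,\,n)\in\calC_{\mathrm V}$ gives $([i]+[j])([j]+[n])=\oz$ and $\bil{\alpha_{ij}}{\alpha_{jn}}=-d_j$ (table~\ref{t:1} and (\ref{bfa})), the ``$\succ 0$'' clause of the definition yields exactly $\ldbr E_{ij},\,E_{jn}\rdbr=E_{ij}E_{jn}-q_j E_{jn}E_{ij}$, and the ``$\prec 0$'' clause gives the stated $F$-bracket with coefficient $q_j^{-1}$.

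For the base case $j=n-1$ there is nothing to do: both identities are then literally the defining relations (\ref{41b}), namely $\ldbr E_{i,\,n-1},\,E_{n-1,\,n}\rdbr=E_{in}$ and $\ldbr F_{n-1,\,n},\,F_{i,\,n-1}\rdbr=F_{in}$. The key algebraic input is the following identity, valid for homogeneous $A,B,C$ whose roots are all positive and obtained by expanding both sides:
\[
\ldbr\ldbr A,\,B\rdbr,\,C\rdbr-\ldbr A,\,\ldbr B,\,C\rdbr\rdbr=\mu_{BC}\,\ldbr A,\,C\rdbr\,B-\mu_{AB}\,B\,\ldbr A,\,C\rdbr,
\]
where $\mu_{XY}$ denotes the scalar in $\ldbr X,\,Y\rdbr=XY-\mu_{XY}YX$ (so $\mu_{XY}=(-1)^{[X][Y]}q^{-\bil{\cdot}{\cdot}}$ on the roots of $X,Y$); the same identity holds with all roots negative, the exponent of $q$ changing sign throughout. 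In particular, when $\ldbr A,\,C\rdbr=0$ the $q$-supercommutator becomes associative, $\ldbr\ldbr A,\,B\rdbr,\,C\rdbr=\ldbr A,\,\ldbr B,\,C\rdbr\rdbr$.

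For the inductive step assume $j<n-1$. For the $E$-identity I would start from $E_{in}=\ldbr E_{i,\,n-1},\,E_{n-1,\,n}\rdbr$, use the induction hypothesis at level $n-1$ to write $E_{i,\,n-1}=\ldbr E_{ij},\,E_{j,\,n-1}\rdbr$ (legitimate since $i<j<n-1$), and then apply the $q$-Jacobi identity with $A=E_{ij}$, $B=E_{j,\,n-1}$, $C=E_{n-1,\,n}$. Because $j<n-1$ the pair $((i,\,j),\,(n-1,\,n))$ lies in $\calC_{\mathrm{VI}}$, so Proposition~\ref{p:4.2} gives $\ldbr E_{ij},\,E_{n-1,\,n}\rdbr=0$; the correction term vanishes and associativity yields $E_{in}=\ldbr E_{ij},\,\ldbr E_{j,\,n-1},\,E_{n-1,\,n}\rdbr\rdbr=\ldbr E_{ij},\,E_{jn}\rdbr$, the final bracket being $E_{jn}$ by (\ref{41b}). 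The $F$-identity is handled identically: from $F_{in}=\ldbr F_{n-1,\,n},\,F_{i,\,n-1}\rdbr$ one substitutes $F_{i,\,n-1}=\ldbr F_{j,\,n-1},\,F_{ij}\rdbr$ and applies the negative-root Jacobi identity with $A=F_{n-1,\,n}$, $B=F_{j,\,n-1}$, $C=F_{ij}$, the correction again dropping out because $\ldbr F_{n-1,\,n},\,F_{ij}\rdbr=0$ by Proposition~\ref{p:4.2}, to reach $\ldbr\ldbr F_{n-1,\,n},\,F_{j,\,n-1}\rdbr,\,F_{ij}\rdbr=\ldbr F_{jn},\,F_{ij}\rdbr$.

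I expect the main obstacle to be bookkeeping rather than ideas: pinning down the scalars $\mu_{AB},\mu_{BC}$ and the parity signs in the Jacobi identity, and confirming that every bracket that occurs genuinely falls under the same clause (all roots positive, respectively all negative) of the definition of $\ldbr\cdot,\,\cdot\rdbr$, so that the single expansion above applies uniformly. The conceptual content is entirely contained in the associativity of $\ldbr\cdot,\,\cdot\rdbr$ modulo brackets that vanish by $\calC_{\mathrm{VI}}$.
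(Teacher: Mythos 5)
Your proposal is correct and is essentially the paper's own argument: both proofs proceed by induction on $n$ with the base case given by the definition (\ref{41b}), and both rest on the single fact that Proposition \ref{p:4.2} kills the bracket $\ldbr E_{ij},\,E_{n-1,\,n}\rdbr$ (resp.\ $\ldbr E_{ij},\,E_{n,\,n+1}\rdbr$ in the paper's indexing), which permits the regrouping of the triple product. The only difference is cosmetic — you package that regrouping as a general $q$-Jacobi identity (which is correct, given that $\mu$ is a bicharacter on the $Q$-grading), whereas the paper carries out the same expansion and recollection of terms by hand.
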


\begin{proof}
The proposition can be proved by induction over $n$. For $n = j + 1$ we have just the definition (\ref{41b}). Assume that the statement of the proposition is valid for some given $n > j$, then we have
\begin{equation*}
E_{i j} E_{j n} - q_j E_{j n} E_{i j} = E_{i n}.
\end{equation*}
Using this equation and proposition \ref{p:4.2}, we get
\begin{align*}
\ldbr & E_{i j}, \, E_{j, \, n + 1} \rdbr = E_{i j} E_{j, \, n + 1} - q_j E_{j, \, n + 1} E_{i j} = E_{i j}(E_{j n} E_{n, \, n + 1} - q_n E_{n, \, n + 1} E_{j n}) \\
& {} - q_j (E_{j n} E_{n, \, n + 1} - q_n E_{n, \, n + 1} E_{j n}) E_{i j} = (E_{i j} E_{j n} - q_j E_{j n} E_{i j}) E_{n, \, n + 1} \\
& \hspace{10em} - q_n E_{n, \, n + 1} (E_{i j} E_{j n} - q_j E_{j n} E_{i j}) = \ldbr E_{i n}, \, E_{n, n + 1} \rdbr = E_{i, \, n + 1}.
\end{align*}
Thus, the first equation of the proposition is true. The second one can be proved in the same way.
\end{proof}


\begin{proposition} \label{p:4.4}
For any $((i, \, j), \, (m, \, n)) \in \calC_{\mathrm{II}}$ one has
\begin{align*}
& E_{i j}, \, E_{m n} \rdbr = E_{i j} E_{m n} - (-1)^{[m] + [n]} E_{m n} E_{i j} = 0, \\
& \ldbr F_{m n}, \, F_{i j} \rdbr = F_{m n} F_{i j} - (-1)^{[m] + [n]} F_{i j} F_{m n} = 0.
\end{align*}
\end{proposition}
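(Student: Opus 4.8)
The plan is to prove the slightly stronger, symmetric statement that $E_{ij}$ and $E_{mn}$ $q$-supercommute, i.e. $E_{ij}E_{mn} = (-1)^{[m]+[n]}E_{mn}E_{ij}$ (and likewise for the $F$'s); by table~\ref{t:1} this is exactly the vanishing asserted, since for $\calC_{\mathrm{II}}$ one has $\bil{\alpha_{ij}}{\alpha_{mn}} = 0$ and $([i]+[j])([m]+[n]) = [m]+[n]$. The engine of the whole argument is one elementary observation: if $C = C_1 C_2 - \lambda\, C_2 C_1$ for a scalar $\lambda$, and an element $X$ satisfies $X C_1 = \mu_1 C_1 X$ and $X C_2 = \mu_2 C_2 X$ for scalars $\mu_1,\mu_2$, then $X C = \mu_1\mu_2\, C X$. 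Every Cartan--Weyl generator is assembled from shorter ones by the recursion (\ref{41b}), whose explicit form $E_{i,j+1} = E_{ij}E_{j,j+1} - q_j E_{j,j+1}E_{ij}$ is of precisely this type (the super-sign in (\ref{41b}) is always $+1$, as a short parity check on the indices shows), so the observation applies repeatedly with $\lambda$ a power of $q$.

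First I would reduce to the case where the inner root vector is a simple generator. Fixing $(i,j)$, I induct on the length $n-m$. The base case $n = m+1$ is treated below. For $n > m+1$ write $E_{mn} = \ldbr E_{m,n-1}, E_{n-1,n}\rdbr$ via (\ref{41b}) and apply the observation with $X = E_{ij}$, $C_1 = E_{m,n-1}$, $C_2 = E_{n-1,n}$: the relation $X C_1 = \mu_1 C_1 X$ with $\mu_1 = (-1)^{[m]+[n-1]}$ is the inductive hypothesis (the pair $((i,j),(m,n-1))$ again lies in $\calC_{\mathrm{II}}$), while $X C_2 = \mu_2 C_2 X$ with $\mu_2 = (-1)^{[n-1]+[n]}$ is the base case for the simple generator $E_{n-1,n}$. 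Since $\mu_1\mu_2 = (-1)^{[m]+[n]}$, the observation yields the desired relation for $E_{mn}$.

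It remains to prove the base case: $\ldbr E_{ij}, E_k\rdbr = 0$, i.e. $E_{ij}E_k = (-1)^{[k]+[k+1]}E_k E_{ij}$, whenever $i < k$ and $k+1 < j$. Here I would induct on $j - i \ge 3$. In the step, decompose the outer vector: if $k \le j-3$, use $E_{ij} = \ldbr E_{i,j-1}, E_{j-1,j}\rdbr$ and apply the observation with $X = E_k$, the factor $E_k\,E_{i,j-1}$ being governed by the inductive hypothesis on $[i,j-1]$ (still $\calC_{\mathrm{II}}$) and $E_k\,E_{j-1,j}$ by $\calC_{\mathrm{VI}}$ commutation (proposition~\ref{p:4.2}, since $k+1 < j-1$); if instead $k = j-2$, decompose from the left, $E_{ij} = \ldbr E_{i,i+1}, E_{i+1,j}\rdbr$ (proposition~\ref{p:4.3}), where now $E_k$ commutes with $E_{i,i+1}$ by $\calC_{\mathrm{VI}}$ and $q$-supercommutes with $E_{i+1,j}$ by the inductive hypothesis on $[i+1,j]$. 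Each step lowers $j-i$ while keeping $k$ interior, so the induction bottoms out at $j - i = 3$, forcing $k = i+1$.

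The only genuinely computational point --- and the step I expect to be the main obstacle --- is this minimal case $\ldbr E_{i,i+3}, E_{i+1}\rdbr = 0$, where no further decomposition helps and one must expand $E_{i,i+3} = \ldbr\ldbr E_i, E_{i+1}\rdbr, E_{i+2}\rdbr$ into the generators $E_i, E_{i+1}, E_{i+2}$ and reduce using the Serre relations directly. When $i = M-1$ this is immediate: $\ldbr E_{i,i+3}, E_{i+1}\rdbr = \ldbr\ldbr\ldbr E_{M-1}, E_M\rdbr, E_{M+1}\rdbr, E_M\rdbr$ is precisely the quartic relation (\ref{glsre}). In all other cases it follows from the cubic Serre relations (\ref{eesm})--(\ref{esep}) together with $\ldbr E_i, E_{i+2}\rdbr = 0$ from (\ref{imjgo}), the only subtlety being the careful tracking of the sign $(-1)^{[k]+[k+1]}$ and the powers of $q$ (with the factors $q_i$, $q_{i+2}$ produced by (\ref{41b})), and the separate treatment of an odd generator $E_M$ via $E_M^2 = 0$. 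The relations for the $F_{ij}$ are obtained verbatim from the analogous recursion in (\ref{41b}).
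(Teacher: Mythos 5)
Your proposal is correct and takes essentially the same approach as the paper: reduce to the case where one factor is a simple root vector by means of the recursion (\ref{41b}) (where the super-sign is indeed always $+1$), and bottom out at the quartic identity $\ldbr E_{i,\,i+3}, \, E_{i+1} \rdbr = 0$, which is the Serre relation (\ref{glsre}) when $i+1=M$ and otherwise follows from (\ref{eesm})--(\ref{esep}) and (\ref{imjgo}). The paper runs the same induction bottom-up, growing the outer and then the inner interval outward from the minimal case, rather than peeling them down as you do; the two organizations are equivalent.
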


\begin{proof}
Let us first prove that
\begin{equation}
\ldbr E_{m - 1, \, m + 2}, \, E_{m, \, m + 1} \rdbr = 0 \label{e44a}
\end{equation}
for any $2 \le m \le M + N - 2$. It is easy to see that for  $m = M$, equation (\ref{e44a}) is just the Serre relation (\ref{glsre}). If $m \ne M$, we have\footnote{It is clear that either $m < M$ or $m > M$, so that $q_m = q_{m + 1}$.}
\begin{align*}
& \ldbr E_{m - 1, \, m + 2}, \, E_{m, \, m + 1} \rdbr = \ldbr \ldbr \ldbr E_{m - 1, \, m}, \, E_{m, \, m + 1} \rdbr, \, E_{m + 1, \, m + 2} \rdbr, \, E_{m, \, m + 1} \rdbr \\*
& \hspace{4em} {} = E_{m - 1} E_m E_{m + 1} E_m - q_m E_{m - 1} E_{m + 1} E_m^2 - q_m E_m E_{m + 1} E_{m - 1} E_m  \\*
& \hspace{4em} {} + q_m^2 E_{m + 1} E_m E_{m -1} E_m - E_m E_{m - 1} E_m E_{m + 1} + q_m E_m E_{m - 1} E_{m + 1} E_m \\*
& \hspace{15em} {}  + q_m E_m^2 E_{m + 1} E_{m - 1} - q_m^2 E_m E_{m + 1} E_m E_{m - 1}.
\end{align*}
Using the first equations of (\ref{eesm}) and (\ref{esep}), we obtain
\begin{align*}
& \ldbr E_{m - 1, \, m + 2}, \, E_{m, \, m + 1} \rdbr \\
& \hspace{1em} {} = [2]_q^{-1} E_{m + 1} (E_m^2 E_{m - 1} + E_{m + 1} E_m^2)- q_m E_{m - 1} E_{m + 1} E_m^2 - q_m E_m E_{m - 1} E_{m + 1} E_m  \\
& \hspace{1em} {} + [2]_q^{-1} q_m^2 E_{m + 1} (E_m^2 E_{m - 1} + E_{m - 1} E_m^2) - [2]_q^{-1} (E_m^2 E_{m - 1} + E_{m - 1} E_m^2) E_{m + 1} \\
& \hspace{1em} {} + q_m E_m E_{m - 1} E_{m + 1} E_m + q_m E_m^2 E_{m + 1} E_{m - 1} - [2]_q^{-1} q_m^2 (E_m^2 E_{m + 1} + E_{m + 1} E_m^2) E_{m - 1}.
\end{align*}
The Serre relations (\ref{glsra}) give $E_{m - 1} E_{m + 1} = E_{m + 1} E_{m - 1}$, and we see that equation (\ref{e44a}) is true for any admissible value of $m$.

Assume that
\begin{equation}
\ldbr E_{i j}, E_{m, \, m + 1} \rdbr = 0 \label{e44b}
\end{equation}
for some $2 \le i < m < j - 1 \le M + N - 1$. We have
\begin{multline*}
\ldbr E_{i - 1, \, j}, E_{m, \, m + 1} \rdbr = \ldbr E_{i - 1, i} E_{i j} - q_i E_{i j} E_{i - 1, \, i}, E_{m, \, m + 1} \rdbr \\
= E_{i - 1, \, i} \ldbr E_{i j}, \, E_{m, \, m + 1} \rdbr - q_i \ldbr E_{i j}, \, E_{m, \, m + 1} \rdbr E_{i - 1, \, i} = 0.
\end{multline*}
If equation (\ref{e44b}) is valid for some $1 \le i < m < j - 1 \le M + N - 2$, then
\begin{multline*}
\ldbr E_{i, \, j + 1}, E_{m, \, m + 1} \rdbr = \ldbr E_{i j} E_{j, \, j + 1} - q_j E_{j, \, j + 1} E_{i j}, \, E_{m, \, m + 1} \rdbr \\
= \ldbr E_{i j}, \, E_{m, \, m + 1} \rdbr E_{j, \, j + 1} - q_j E_{j, \, j + 1} \ldbr E_{i j} , \, E_{m, \, m + 1} \rdbr = 0.
\end{multline*}
Thus, equation (\ref{e44b}) is valid for any admissible $i$, $j$ and $m$.

Finally, assume that the equation
\begin{equation*}
\ldbr E_{i j}, E_{m n} \rdbr = E_{i j} E_{m n} - (-1)^{[m] + [n]} E_{m n} E_{i j} = 0
\end{equation*}
is valid for some $1 \le i < m < n < j - 1 \le M + N - 1$, then we have
\begin{multline*}
\ldbr E_{i j}, \, E_{m, \, n + 1} \rdbr = \ldbr E_{i j}, \, E_{m n} E_{n, \, n + 1} - q_n E_{n, \, n + 1} E_{m n} \rdbr \\*
{} = \ldbr E_{i j}, \, E_{m n} \rdbr E_{n, \, n + 1} - (-1)^{[n] + [n + 1]} q_n E_{n, \, n + 1} \ldbr E_{i j}, \, E_{m n} \rdbr = 0. 
\end{multline*}

Now, it is clear that the first equation of the proposition is valid. The second equation of the proposition can be proved in a similar way. 
\end{proof}


\begin{proposition} \label{p:4.5}
For any $((i, \, j), \, (m, \, n)) \in \calC_{\mathrm{I}}$ one has
\begin{align}
& \ldbr E_{i j}, \, E_{m n} \rdbr = E_{i j} E_{i n} - (-1)^{[i] + [j]} q_i^{-1} E_{i n} E_{i j} = 0,  \label{e35a} \\
& \ldbr F_{m n}, \, F_{i j} \rdbr  = F_{i n} F_{i j} - (-1)^{[i] + [j]} q_i F_{i j} F_{i n} = 0. \label{e35b}
\end{align}
For any $((i, \, j), \, (m, \, n)) \in \calC_{\mathrm{III}}$ one has
\begin{align}
& \ldbr E_{i j}, \, E_{m n} \rdbr = E_{i j} E_{m j} - (-1)^{[m] + [j]} q_j^{-1} E_{m j} E_{i j} = 0, \label{e35c} \\*
& \ldbr F_{m n}, \, F_{i j} \rdbr = F_{m j} F_{i j} - (-1)^{[m] + [j]} q_j F_{i j} F_{m j} = 0. \label{e35d}
\end{align}
\end{proposition}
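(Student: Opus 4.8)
Both statements are the Cartan--Weyl analogues of the cubic Serre relations: (\ref{e35a})--(\ref{e35b}) generalise (\ref{glsrcp}) and (\ref{e35c})--(\ref{e35d}) generalise (\ref{glsrcm}). This is made transparent by proposition~\ref{p:4.3}: since $i<j<n$ one may write $E_{in}=\ldbr E_{ij},\,E_{jn}\rdbr$, so that the left-hand side of (\ref{e35a}) is the twofold bracket $\ldbr E_{ij},\,\ldbr E_{ij},\,E_{jn}\rdbr\rdbr$; likewise, since $i<m<j$, we have $E_{ij}=\ldbr E_{im},\,E_{mj}\rdbr$ and the left-hand side of (\ref{e35c}) becomes $\ldbr\ldbr E_{im},\,E_{mj}\rdbr,\,E_{mj}\rdbr$. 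In each case a root vector acts twice on a building block, exactly as in (\ref{glsrcp}) and (\ref{glsrcm}). The plan is to prove the vanishing of these twofold brackets by induction on the lengths of the root vectors, the base of the induction being the genuine cubic Serre relations.

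For branch $\calC_{\mathrm I}$ I would keep the shared lower index $i$ fixed and argue in two stages. First I establish the short relation $\ldbr E_{ij},\,E_{i,\,j+1}\rdbr=0$ for all $i<j$ by induction on $j-i$: for $j=i+1$ it is precisely the Serre relation (\ref{glsrcp}) when $i\ne M$, and it follows from $E_M^2=0$ when $i=M$; the inductive step expands $E_{ij}$ and $E_{i,\,j+1}$ through the defining relations (\ref{41b}) and collapses the resulting words by means of (\ref{eesm})--(\ref{esep}) and the commutations of propositions~\ref{p:4.2}--\ref{p:4.4}, in complete parallel with the computation that established (\ref{e44a}). Then I extend to $\ldbr E_{ij},\,E_{in}\rdbr=0$ for all $n>j$ by induction on $n$: writing $E_{i,\,n+1}=\ldbr E_{in},\,E_{n,\,n+1}\rdbr$ from (\ref{41b}) and noting that $E_{ij}$ commutes with $E_{n,\,n+1}$ by proposition~\ref{p:4.2} (branch $\calC_{\mathrm{VI}}$, since $j<n$), the Leibniz-type expansion of the bracket reduces $\ldbr E_{ij},\,E_{i,\,n+1}\rdbr$ to $\ldbr E_{ij},\,E_{in}\rdbr$, which vanishes by the induction hypothesis.

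Branch $\calC_{\mathrm{III}}$ is treated dually, with the shared upper index $j$ fixed: the base of the induction is the Serre relation (\ref{glsrcm}) (again replaced by $E_M^2=0$ when the relevant simple index equals $M$), and the inductive steps grow $E_{im}$ and $E_{mj}$ one simple root at a time using (\ref{41b}), commuting the additional factors through by propositions~\ref{p:4.2}--\ref{p:4.4}. The relations (\ref{e35b}) and (\ref{e35d}) for the lower triangular generators $F_{ij}$ follow from the same arguments applied to the $F$'s.

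The main obstacle is the odd sector near the index $M$. Whenever the interval carrying a root vector contains $M$, the cubic Serre relation is unavailable, because $\bil{\alpha_M}{\alpha_M}=0$, and the required vanishing must instead be extracted from $E_M^2=F_M^2=0$ and, in the genuinely mixed configurations, from the quartic Serre relation (\ref{glsre}). Keeping track of the signs $(-1)^{[i]+[j]}$ and of the powers $q_i^{\pm1}$, $q_j^{\pm1}$ through the twofold brackets in these cases is the delicate part of the computation; elsewhere the gradings are even and the argument reduces to the non-super one.
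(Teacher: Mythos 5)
Your overall architecture is the right one -- reduce everything to Cartan--Weyl analogues of the cubic Serre relations, induct on the lengths of the root vectors with the genuine Serre relations as the base, and handle the indices straddling $M$ through $E_M^2=0$ -- and your second stage (extending $\ldbr E_{ij},\,E_{in}\rdbr=0$ from $n=j+1$ to all $n>j$ by writing $E_{i,\,n+1}=\ldbr E_{in},\,E_{n,\,n+1}\rdbr$ and commuting $E_{ij}$ past $E_{n,\,n+1}$ via branch $\calC_{\mathrm{VI}}$) is correct and coincides with one of the two inductions in the paper. The problem is your stage one. You propose to establish the adjacent case $\ldbr E_{ij},\,E_{i,\,j+1}\rdbr=0$ for all $j-i$ first, by an induction whose step ``expands $E_{ij}$ and $E_{i,\,j+1}$ through (\ref{41b}) and collapses the resulting words \ldots{} in complete parallel with the computation that established (\ref{e44a}).'' That parallel does not exist: the computation for (\ref{e44a}) is a one-shot calculation in which every factor is a \emph{simple} root vector, so the words of length four can be collapsed by the literal Serre relations (\ref{eesm})--(\ref{esep}). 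For composite $E_{ij}$ the analogous collapse requires identities of the form $E_{ij}^2E_j-[2]_qE_{ij}E_jE_{ij}+E_jE_{ij}^2=0$, i.e.\ the generalized Serre relations (\ref{eeei})--(\ref{eeeii}) for composite root vectors -- which are exactly the content of the proposition you are proving. (Nor may you invoke $E_{ij}^2=0$ for composite odd root vectors: that is proposition \ref{p:4.6}, which the paper derives \emph{from} the present proposition.) As written, stage one is circular, and full expansion into simple generators is not an alternative, since the number of words grows exponentially in $j-i$.

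The paper escapes this by running the two inductions in the opposite order. It first proves $\ldbr E_{i,\,i+1},\,E_{in}\rdbr=0$ for all $n$ (base case $n=i+2$ is the cubic Serre relation, or a direct computation from $E_M^2=0$ when $i=M$; the step only needs $E_{i,\,i+1}$ to commute with $E_{n,\,n+1}$). It then holds $n$ fixed and grows the first index from $j=i+1$ up to $j=n-1$, using proposition \ref{p:4.4} (branch $\calC_{\mathrm{II}}$) to move the extra factor $E_{j,\,j+1}$ past $E_{in}$ -- which is available precisely while $j+1<n$. The adjacent case $n=j+1$ that you want as the \emph{base} of your stage two thus emerges as the \emph{terminal output} of the paper's second induction, never as an input. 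Two smaller points: the quartic Serre relation (\ref{glsre}) enters this proposition only indirectly, through the base case of proposition \ref{p:4.4}, not in the configurations you describe; and in the $i=M$ base case the middle terms cancel because $q_{M+1}-q_M^{-1}=0$, a coefficient identity you should record explicitly alongside $E_M^2=0$.
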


\begin{proof}
Let us consider the case $((i, \, j), \, (m, \, n)) \in \calC_{\mathrm{I}}$ and prove equation (\ref{e35a}). First we demonstrate that
\begin{equation}
\ldbr E_{i, \, i + 1}, \, E_{i, \, i + 2} \rdbr = 0 \label{e35e}
\end{equation}
for any $1 \le i \le M + N - 2$. We have
\begin{equation*}
\ldbr E_{i, \, i + 1}, \, E_{i, \, i + 2} \rdbr = \ldbr E_{i, \, i + 1}, \, \ldbr E_{i, \, i + 1}, \, E_{i + 1, \, i + 2} \rdbr \rdbr =  \ldbr E_i, \, \ldbr E_i, \, E_{i + 1} \rdbr \rdbr.
\end{equation*}
Hence, for $i \ne M$, the equation (\ref{e35e}) is equivalent to the first of the Serre relations (\ref{glsrcp}). For $i = M$ we obtain 
\begin{multline*}
\ldbr E_{i, \, i + 1}, \, E_{i, \, i + 2} \rdbr = \ldbr E_{M, \, M + 1}, \, \ldbr E_{M, \, M + 1}, \, E_{M + 1, \, M + 2} \rdbr \rdbr \\
= \ldbr E_M, \, E_M E_{M + 1} - q_{M + 1} E_{M + 1} E_M \rdbr = -(q_{M + 1}^{} - q_M^{-1}) E_M E_{M + 1} E_M = 0. 
\end{multline*}
Thus, equation (\ref{e35e}) is valid for any $1 \le i \le M + N - 2$. Assume that
\begin{equation}
\ldbr E_{i, \, i + 1}, \, E_{i n} \rdbr = E_{i, \, i + 1} E_{i n} - (-1)^{[i] + [i + 1]} q_i^{-1} E_{i n} E_{i, \, i + 1} = 0 \label{e35f}
\end{equation}
for some $1 \le i < n - 1 \le M + N - 2$. Using equation (\ref{e35f}), we obtain
\begin{multline*}
\ldbr E_{i, \, i + 1}, \, E_{i, \, n + 1} \rdbr = \ldbr E_{i, \, i + 1}, \, E_{i n} E_{n, \, n + 1} - q_n E_{n, \, n + 1} E_{i n} \rdbr = E_{i, \, i + 1} E_{i n} E_{n, \, n + 1} \\
{} - q_n E_{i, \, i + 1} E_{n, \, n + 1} E_{i n} - (-1)^{[i] + [i + 1]} q_i^{-1}  (E_{i n} E_{n, \, n + 1} E_{i, \, i + 1} - q_n E_{n, \, n + 1} E_{i n} E_{i, \, i + 1}) = 0.
\end{multline*}
 we obtain that $\ldbr E_{i, \, i + 1}, \, E_{i, \, n + 1} \rdbr = 0$. It follows that
\begin{equation*}
\ldbr E_{i, \, i + 1}, \, E_{i n} \rdbr = 0
\end{equation*}
for any $1 \le i < n - 1 \le M + N - 1$. Now, assume that
\begin{equation*}
\ldbr E_{i j}, E_{i n} \rdbr = E_{i j} E_{i n} - (-1)^{[i] + [j]} q_i^{-1} E_{i n} E_{i j} = 0
\end{equation*}
for some $1 \le i < j + 1 < n \le M + N$. Using proposition \ref{p:4.4}, we get
\begin{multline*}
\ldbr E_{i, \, j + 1}, \, E_{i n} \rdbr = \ldbr E_{i j} E_{j, \, j + 1} - q_j E_{j, \, j + 1} E_{i j}, \, E_{i n} \rdbr \\
= (-1)^{[j] + [j + 1]} \ldbr E_{i j}, \, E_{i n} \rdbr E_{j, \, j + 1} - q_j  E_{j, \, j + 1} \ldbr E_{i j}, \, E_{i n} \rdbr = 0.
\end{multline*}
Thus, for $((i, \, j), \, (m, \, n)) \in \calC_{\mathrm{I}}$, equation (\ref{e35a}) is true. Equation (\ref{e35b}) can be proved in the same way. In the case when $((i, \, j), \, (m, \, n)) \in \calC_{\mathrm{III}}$, one can prove equations (\ref{e35c}) and (\ref{e35d}) in a similar way. 
\end{proof}

It follows from the above proposition that if $((i, \, j), \, (m, \, n)) \in \calC_{\mathrm{I}}$, then
\begin{equation}
\ldbr  E_{i j}, \, \ldbr E_{i j}, \, E_{j n} \rdbr \rdbr = 0, \qquad \ldbr \ldbr F_{j n}, \, F_{i j} \rdbr, \, F_{i j} \rdbr = 0, \label{eeei}
\end{equation}
and if $((i, \, j), \, (m, \, n)) \in \calC_{\mathrm{III}}$, then
\begin{equation}
\ldbr \ldbr E_{i m}, \, E_{m n} \rdbr, \, E_{m n} \rdbr = 0, \qquad \ldbr F_{m n}, \, \ldbr F_{m n}, \, F_{i m} \rdbr \rdbr = 0. \label{eeeii}
\end{equation}
These relations are a generalization of the Serre relations (\ref{glsrcm}) and (\ref{glsrcp}).

Note that the quantum supergroup $\uqglmn$ has two natural subgroups isomorphic to $\mathrm U_q(\mathfrak{gl}_{M})$ and $\mathrm U_q(\mathfrak{gl}_{N})$. The former is generated by $E_i$, $F_i$, $i = 1, \ldots, M - 1$, and $q^X$, where $X$ belongs to the linear span of the elements $K_i$, $i = 1, \ldots M$, and the latter is generated by $E_i$, $F_i$, $i = M + 1, \ldots, M + N - 1$, and $q^X$, where $X$ belongs to the linear span of the elements $K_i$, $i = M + 1, \ldots M + N$. It is clear that $[i] + [j] = \oz$ iff $E_{i j}$ belongs to one of these two subgroups. Each of them has no zero divisors, see the paper \cite{Yam89}. Hence, for any element $E_{i j}$ belonging to them one has $E_{i j}^2 \ne 0$. In other words, if $[i] + [j] = \oo$ then $E_{i j}^2 \ne 0$.


\begin{proposition} \label{p:4.6}
For all $1 \le i < j \le M + N$ such that $[i] + [j] = \oo$ one has
\begin{equation}
\frac{1}{2} \, \ldbr E_{i j}, \, E_{i j} \rdbr = E_{i j}^2 = 0. \label{eijs}
\end{equation}
\end{proposition}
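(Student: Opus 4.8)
The first equality will follow directly from the definition of the $q$-supercommutator. Since $i<j$ and $[i]+[j]=\oo$, exactly one of the two indices exceeds $M$, so $i\le M<j$; hence by (\ref{bfb}) one has $\bil{\alpha_{ij}}{\alpha_{ij}}=d_i+d_j=0$. As $\alpha_{ij}\succ 0$ and $[E_{ij}]=\oo$, the first case in the definition of $\ldbr\,\cdot\,,\,\cdot\,\rdbr$ gives $\ldbr E_{ij},\,E_{ij}\rdbr=E_{ij}^2-(-1)\,q^{-\bil{\alpha_{ij}}{\alpha_{ij}}}E_{ij}^2=2E_{ij}^2$, which is exactly the asserted identity. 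The whole content of the proposition is therefore the relation $E_{ij}^2=0$, and this I would prove by induction.

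The base case is the defining relation $E_{M,\,M+1}^2=E_M^2=0$. Every pair with $i\le M<j$ can be reached from $(M,\,M+1)$ by first raising the second index through $(M,\,M+1),(M,\,M+2),\dots,(M,\,j)$ and then lowering the first index through $(M,\,j),(M-1,\,j),\dots,(i,\,j)$; at each intermediate stage one still has $i'\le M<j'$, so the root vector remains odd. A raising step passes from $a=E_{ij}$ to $b=E_{i,\,j+1}=\ldbr E_{ij},\,E_{j,\,j+1}\rdbr=ac-q^{-1}ca$ with $c=E_{j,\,j+1}$ even (both indices $>M$, so $q_j=q^{-1}$), and a lowering step passes from $a=E_{ij}$ to $b=E_{i-1,\,j}=\ldbr E_{i-1,\,i},\,E_{ij}\rdbr=ga-q\,ag$ with $g=E_{i-1,\,i}$ even (both indices $\le M$, so $q_i=q$). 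In either case $c$, respectively $g$, lies in one of the two subalgebras isomorphic to $\mathrm U_q(\mathfrak{gl}_M)$ and $\mathrm U_q(\mathfrak{gl}_N)$. Thus the induction reduces to a single lemma: if $a$ is odd with $a^2=0$, then $b^2=0$.

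For the raising step I would first extract the Serre-type relation $ac^2-[2]_q\,cac+c^2a=0$. This comes from Proposition \ref{p:4.5}: the pair $((i,\,j+1),\,(j,\,j+1))$ lies in $\calC_{\mathrm{III}}$, so $\ldbr E_{i,\,j+1},\,E_{j,\,j+1}\rdbr=0$; expanding this vanishing bracket with $q_{j+1}^{-1}=q$ and substituting $b=ac-q^{-1}ca$ gives precisely the displayed relation. Multiplying it by $a$ on the right and on the left and using $a^2=0$ yields $ac^2a=[2]_q\,caca=[2]_q\,acac$, whence $acac=caca$. Substituting this, together with $a^2=0$, into $b^2=acac-q^{-1}ac^2a+q^{-2}caca$ and using the identity $q^{-1}[2]_q=1+q^{-2}$ collapses the whole expression to zero. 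The lowering step is the mirror image: the relation $g^2a-[2]_q\,gag+ag^2=0$ comes from the vanishing $\calC_{\mathrm I}$ bracket $\ldbr E_{i-1,\,i},\,E_{i-1,\,j}\rdbr=0$, and the identical manipulation gives $b^2=0$.

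The identity $\tfrac12\ldbr E_{ij},\,E_{ij}\rdbr=E_{ij}^2$ being purely formal, the hard part will be $E_{ij}^2=0$. The main obstacle is that $a$ and the even generator $c$ (or $g$) do not reorder in any elementary way — their $q$-supercommutator is the very generator $b$ under study — so $b^2$ cannot be simplified by naive commutation. The decisive input is the quadratic Serre relation between $a$ and $c$, which I must produce from Proposition \ref{p:4.5} before anything cancels; once it is in hand, the vanishing is forced by the $q$-number identity $q^{-1}[2]_q=1+q^{-2}$. A secondary, purely bookkeeping point is to check that the chosen path from $(M,\,M+1)$ to $(i,\,j)$ never leaves the region $i'\le M<j'$, so that $c$ and $g$ are genuinely even and carry the values $q_j=q^{-1}$, $q_i=q$ used above.
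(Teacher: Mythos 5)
Your proof is correct and takes essentially the same route as the paper: the same base case $E_M^2=0$, the same path that first raises the second index from $M+1$ to $j$ and then lowers the first index from $M$ to $i$, and the same key input --- the quadratic Serre-type relations (the paper's (\ref{eeei}) and (\ref{eeeii}), both consequences of Proposition \ref{p:4.5}) multiplied by $a$ on either side to get $acac=caca$ and collapse $b^2$. The only difference is cosmetic: you spell out why $\tfrac12\ldbr E_{ij},\,E_{ij}\rdbr=E_{ij}^2$, which the paper leaves implicit.
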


\begin{proof}
In fact, we should demonstrate that if $i \le M < j$, then
\begin{equation}
E_{i j}^2 = 0. \label{eijse}
\end{equation}
First, we show that
\begin{equation}
E_{M j}^2 = 0 \label{emmpk}
\end{equation}
for all $j > M$. It is certainly the case, at least for $j = M + 1$. Using the fact that $q_j = q^{-1}$ for any $j > M$, we obtain
\begin{multline}
E^2_{M, \, j + 1} = (E_{M j} E_{j, \, j + 1} - q_j E_{j, \, j + 1} E_{M j})^2 \\
{} = E_{M j} E_{j, \, j + 1} E_{M j} E_{j, \, j + 1} - q^{-1} E_{M j} E^2_{j, \, j + 1} E_{M j} + q^{-2} E_{j, \, j + 1} E_{M j} E_{j, \, j + 1} E_{M j}. \label{esm}
\end{multline}
It follows from the first relation of (\ref{eeeii}) that
\begin{equation*}
\ldbr \ldbr E_{M j}, \, E_{j, \, j + 1} \rdbr, \, E_{j, \, j + 1} \rdbr = 0,
\end{equation*}
or, in a more explicit form,
\begin{equation*}
E_{M j}^{} E^2_{j, \, j + 1} - [2]_q E_{j, \, j + 1}^{} E_{M j}^{} E_{j, \, j + 1}^{} + E^2_{j, \, j + 1} E_{M j}^{}  = 0,
\end{equation*}
Multiplying this equation from the left and from the right by $E_{M j}$, we obtain
\begin{align}
- [2]_q E_{M j}^{} E_{j, \, j + 1}^{} E_{M j}^{} E_{j, \, j + 1}^{} + E_{M j}^{} E^2_{j, \, j + 1} E_{M j}^{}  & = 0, \\
E_{M j}^{} E^2_{j, \, j + 1} E_{M j}^{} - [2]_q E_{j, \, j + 1}^{} E_{M j}^{} E_{j, \, j + 1}^{} E_{M j}^{} & = 0. \label{esee}
\end{align}
It follows that
\begin{equation*}
E_{M j}  E_{j, \, j + 1} E_{M j} E_{j, \, j + 1} = E_{j, \, j + 1} E_{M j} E_{j, \, j + 1} E_{M j}.
\end{equation*}
Using this equation in (\ref{esm}), we get
\begin{equation*}
E^2_{M, \, j + 1} = - q^{-1} ({} - [2]_q E_{M j} E_{j, \, j + 1} E_{M j} E_{j, \, j + 1} + E_{M j} E^2_{j, \, j + 1} E_{M j}).
\end{equation*}
Now equation (\ref{esee}) implies that (\ref{emmpk}) for all $M < j \le M + N$.

Further, we assume that (\ref{eijs}) is true for some $1 < i < M$ and $M < j \le M + N$, then we have
\begin{multline}
E^2_{i - 1, \, j} = (E_{i - 1, \, i} E_{i j} - q E_{i j} E_{i - 1, \, i})^2 \\
{} = E_{i - 1, \, i} E_{i j} E_{i - 1, \, i} E_{i j} - q E_{i j} E^2_{i - 1, \, i} E_{i j} + q^2 E_{i j} E_{i - 1, \, i} E_{i j} E_{i - 1, \, i}. \label{esmmrmo}
\end{multline}
Here we take into account that $d_i = 1$ for any $i < M$. It follows from the first relation of (\ref{eeei}) that
\begin{equation*}
\ldbr E_{i - 1, \, i}, \, \ldbr E_{i - 1, \, i}, \, E_{i j} \rdbr \rdbr = 0,
\end{equation*}
or, in a more explicit form,
\begin{equation*}
E^2_{i - 1, \, i} E_{i j} - [2]_q \, E_{i - 1, \, i} E_{i j} E_{i - 1, \, i} + E_{i j} E^2_{i - 1, \, i} = 0.
\end{equation*}
Multiplying this equation from the left and from the right by $E_{i j}$, we obtain
\begin{align}
E_{i j} E^2_{i - 1, \, i} E_{i j} - [2]_q \,  E_{i j} E_{i - 1, \, i} E_{i j} E_{i - 1, \, i} & = 0, \\
- [2]_q \, E_{i - 1, \, i} E_{i j} E_{i - 1, \, i} E_{i j} + E_{i j} E^2_{i - 1, \, i} E_{i j} & = 0. \label{eeeee}
\end{align}
It follows that
\begin{equation*}
E_{i j} E_{i - 1, \, i} E_{i j} E_{i - 1, \, i} = E_{i - 1, \, i} E_{i j} E_{i - 1, \, i} E_{i j}.
\end{equation*}
Using this equation in (\ref{esmmrmo}), we come to
\begin{equation*}
E^2_{i - 1, \, j} = -q (- [2]_q \, E_{i - 1, \, i} E_{i j} E_{i - 1, \, i} E_{i j} + E_{i j} E^2_{i - 1, \, i} E_{i j}).
\end{equation*}
Now equation (\ref{eeeee}) gives
\begin{equation*}
E^2_{i - 1, \, j} = 0.
\end{equation*}
Thus, we see that the statement of the proposition is always true.
\end{proof}


\begin{proposition} \label{p:4.7}
For any $((i, \, j), \, (m, \, n)) \in \calC_{\mathrm{IV}}$ one has
\begin{align}
& \ldbr E_{i j}, \, E_{m n} \rdbr = E_{i j} E_{m n} - (-1)^{[m] + [j]} E_{m n} E_{i j} = - (q_m - q_m^{-1}) E_{m j} E_{i n}, \label{e37a} \\
& \ldbr F_{m n}, \, F_{i j} \rdbr = F_{m n} F_{i j} - (-1)^{[m] + [j]} F_{i j} F_{m n} = (q_m - q_m^{-1}) F_{i n} F_{m j}. \label{e37b}
\end{align}
\end{proposition}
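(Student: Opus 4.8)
The plan is to prove the first identity (\ref{e37a}) by induction on $n$ with $i$, $m$, $j$ held fixed (so $i < m < j < n$ throughout), the identity (\ref{e37b}) for the $F$'s being entirely analogous. The workhorse will be the two Leibniz rules for the $q$-supercommutator of positive root vectors, which follow immediately from its definition: for $X \in \uqglmn_\alpha$, $Y \in \uqglmn_\beta$, $Z \in \uqglmn_\gamma$ with $\alpha, \beta, \gamma \succ 0$,
\begin{gather*}
\ldbr X, \, Y Z \rdbr = \ldbr X, \, Y \rdbr Z + (-1)^{[\alpha][\beta]} q^{- \bil \alpha \beta} Y \ldbr X, \, Z \rdbr, \\
\ldbr X Y, \, Z \rdbr = X \ldbr Y, \, Z \rdbr + (-1)^{[\beta][\gamma]} q^{- \bil \beta \gamma} \ldbr X, \, Z \rdbr Y.
\end{gather*}
At every step the $q$-supercommutators that appear are of pairs belonging to branches already settled in Propositions \ref{p:4.2}--\ref{p:4.5}.

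For the base case $n = j + 1$ I would expand $E_{m, \, j + 1} = E_{mj} E_{j, \, j + 1} - q_j E_{j, \, j + 1} E_{mj}$ via (\ref{41b}) and apply the first Leibniz rule to each term. Only two elementary $q$-supercommutators survive: $\ldbr E_{ij}, \, E_{mj} \rdbr$, which vanishes because $(i, \, j), (m, \, j) \in \calC_{\mathrm{III}}$ (equation (\ref{e35c})), and $\ldbr E_{ij}, \, E_{j, \, j + 1} \rdbr = E_{i, \, j + 1}$, which is branch $\calC_{\mathrm V}$ (Proposition \ref{p:4.3}). Using $\bil{\alpha_{ij}}{\alpha_{mj}} = d_j$ from (\ref{bfb}) and the $\calC_{\mathrm{III}}$ parity $([i] + [j])([m] + [j]) = [m] + [j]$, this leaves
\begin{equation*}
\ldbr E_{ij}, \, E_{m, \, j + 1} \rdbr = (-1)^{[m] + [j]} q_j^{-1} E_{mj} E_{i, \, j + 1} - q_j E_{i, \, j + 1} E_{mj}.
\end{equation*}
Because $(i, \, j + 1), (m, \, j) \in \calC_{\mathrm{II}}$, Proposition \ref{p:4.4} gives $E_{i, \, j + 1} E_{mj} = (-1)^{[m] + [j]} E_{mj} E_{i, \, j + 1}$, so the coefficient of $E_{mj} E_{i, \, j + 1}$ becomes $(-1)^{[m] + [j]} (q_j^{-1} - q_j)$. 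Invoking (\ref{kappaqi}) to write $q_j^{-1} - q_j = -(-1)^{[j]} (q - q^{-1})$ and using $[j] + [j] = \oz$, this collapses to $-(-1)^{[m]} (q - q^{-1}) = -(q_m - q_m^{-1})$, which is exactly the claimed base case.

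For the inductive step $n > j + 1$ I would instead expand $E_{mn} = E_{m, \, n - 1} E_{n - 1, \, n} - q_{n - 1} E_{n - 1, \, n} E_{m, \, n - 1}$ and apply the Leibniz rules. Here $\ldbr E_{ij}, \, E_{n - 1, \, n} \rdbr = 0$ since $(i, \, j), (n - 1, \, n) \in \calC_{\mathrm{VI}}$ (Proposition \ref{p:4.2}), while $\ldbr E_{ij}, \, E_{m, \, n - 1} \rdbr = -(q_m - q_m^{-1}) E_{mj} E_{i, \, n - 1}$ by the induction hypothesis, the pair $(i, \, j), (m, \, n - 1)$ being in $\calC_{\mathrm{IV}}$. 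The pair $(m, \, j), (n - 1, \, n)$ is again in $\calC_{\mathrm{VI}}$, so $E_{mj}$ and $E_{n - 1, \, n}$ commute; moreover the $\calC_{\mathrm{VI}}$ entry of Table \ref{t:1} shows the surviving Leibniz prefactor is $1$. Reassembling the two terms and recognizing $E_{in} = E_{i, \, n - 1} E_{n - 1, \, n} - q_{n - 1} E_{n - 1, \, n} E_{i, \, n - 1}$ then yields precisely $-(q_m - q_m^{-1}) E_{mj} E_{in}$.

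The one place demanding care will be the sign and $q$-power bookkeeping in the base case; the pleasant point is that the two surviving terms share the common sign $(-1)^{[m] + [j]}$ and combine through (\ref{kappaqi}), so no case analysis on the position of $M$ relative to $i$, $m$, $j$ is needed. The identity (\ref{e37b}) for the $F$-generators follows by the identical sequence of steps applied to the second relation in (\ref{41b}), with the corresponding $F$-versions of Propositions \ref{p:4.2}--\ref{p:4.5}.
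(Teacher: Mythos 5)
Your proof is correct, but it takes a genuinely different route from the paper. You fix $i<m<j$ and induct on $n$, factoring $E_{mn}=E_{m,\,n-1}E_{n-1,\,n}-q_{n-1}E_{n-1,\,n}E_{m,\,n-1}$ at its top index; the base case $n=j+1$ then rests on branches $\calC_{\mathrm{III}}$, $\calC_{\mathrm{V}}$ and $\calC_{\mathrm{II}}$, and the step on $\calC_{\mathrm{VI}}$ together with the induction hypothesis. The paper instead gives a direct, induction-free computation: it factors the \emph{other} root vector at the intermediate index $m$, writing $E_{ij}=E_{im}E_{mj}-q_mE_{mj}E_{im}$ via Proposition \ref{p:4.3}, kills the cross terms with $\ldbr E_{mj},\,E_{mn}\rdbr=0$ from the $\calC_{\mathrm{I}}$ case of Proposition \ref{p:4.5}, reassembles $E_{im}E_{mn}-q_mE_{mn}E_{im}=E_{in}$, and finishes with Proposition \ref{p:4.4} exactly as you do. The paper's factorization is available precisely because $\calC_{\mathrm{IV}}$ guarantees $i<m<j$, which is what lets it dispense with induction; your version trades that observation for one more layer of recursion but keeps each step a mechanical Leibniz-rule computation, and your sign bookkeeping in the base case (the common factor $(-1)^{[m]+[j]}$ combining with \eqref{kappaqi} to give $-(q_m-q_m^{-1})$ without any case split on the position of $M$) checks out. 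Both arguments use only Propositions \ref{p:4.2}--\ref{p:4.5}, so neither is circular; the difference is purely one of organization.
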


\begin{proof}
Using proposition \ref{p:4.3}, we get
\begin{multline}
\ldbr E_{i j}, \, E_{m n} \rdbr = (E_{i m} E_{m j} - q_m E_{m j} E_{i m}) E_{m n} - (-1)^{[m] + [j]}  E_{m n} (E_{i m} E_{m j} - q_m E_{m j} E_{i m}) \\
= E_{i m} E_{m j} E_{m n} - (-1)^{[m] + [j]}  E_{m n} E_{i m} E_{m j} \\- q_m (E_{m j} E_{i m} E_{m n} - (-1)^{[m] + [j]} E_{m n} E_{m j} E_{i m}). \label{e37c}
\end{multline}
Proposition \ref{p:4.5} implies
\begin{equation*}
\ldbr E_{m j}, \, E_{m n} \rdbr = E_{m j} E_{m n} - (-1)^{[m] + [j]} q_m^{-1} E_{m n} E_{m j} = 0.
\end{equation*}
Hence, we have
\begin{equation*}
E_{m j} E_{m n} = (-1)^{[m] + [j]} q_m^{-1} E_{m n} E_{m j}, \qquad E_{m n} E_{m j} = (-1)^{[m] + [j]} q_m E_{m j} E_{m n}
\end{equation*}
Using these equations in (\ref{e37c}), we obtain
\begin{multline*}
\ldbr E_{i j}, \, E_{m n} \rdbr = (-1)^{[m] + [j]} q_m^{-1} (E_{i m} E_{m n} - q_m E_{m n} E_{i m}) E_{m j} \\- q_m E_{m j} (E_{i m} E_{m n} - q_m E_{m n} E_{i m}) = (-1)^{[m] + [j]} q_m^{-1} E_{i n} E_{m j} - q_m E_{m j} E_{i n}.
\end{multline*}
Finally, it follows from proposition \ref{p:4.4} that
\begin{equation*}
E_{i n} E_{m j} = (-1)^{[m] + [j]} E_{m j} E_{i n},
\end{equation*}
therefore,
\begin{equation*}
\ldbr E_{i j}, \, E_{m n} \rdbr = - (q_m - q_m^{-1}) E_{m j} E_{i n}.
\end{equation*}
Thus, equation (\ref{e37a}) is true. In the same way one can prove equation (\ref{e37b}).
\end{proof}


\begin{proposition} \label{p:4.8}
For any $((i, \, j), \, (m, \, n)) \in \calC_{\mathrm{V}}$ one has
\begin{equation*}
\ldbr E_{i j}, \, F_{m n} \rdbr = E_{i j} F_{j n} - F_{j n} E_{i j} = 0, \qquad \ldbr E_{m n}, \, F_{i j} \rdbr = E_{j n} F_{i j} - F_{i j} E_{j n} = 0.
\end{equation*}
For any $((i, \, j), \, (m, \, n)) \in \calC_{\mathrm{VI}}$ one has
\begin{equation*}
\ldbr E_{i j}, \, F_{m n} \rdbr = E_{i j} F_{m n} - F_{m n} E_{i j} = 0, \qquad \ldbr E_{m n}, \, F_{i j} \rdbr = E_{m n} F_{i j} - F_{i j} E_{m n} = 0.
\end{equation*}
\end{proposition}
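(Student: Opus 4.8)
The plan is to reduce both branches $\calC_{\mathrm V}$ and $\calC_{\mathrm{VI}}$ to a single statement: whenever the root vectors $E_{ab}$ and $F_{cd}$ are assembled from \emph{disjoint} sets of simple generators, they super-anticommute. The crucial observation is that, in contrast with the $E$--$E$ situation treated in propositions \ref{p:4.2}--\ref{p:4.5}, where adjacent simple generators interact through the Serre relations, the generators $E_k$ and $F_l$ interact only when $k = l$: the defining relation (\ref{glrc}) gives $\ldbr E_k, F_l \rdbr = 0$ for $k \ne l$, that is $E_k F_l = (-1)^{[E_k][F_l]} F_l E_k$, with no Cartan contribution and (by the mixed-sign rule in the definition of the $q$-supercommutator) no $q$-factor. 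In $\calC_{\mathrm V}$ the vector $E_{ij}$ is built from $E_i, \ldots, E_{j - 1}$ and $F_{jn}$ from $F_j, \ldots, F_{n - 1}$, whose index sets $\{i, \ldots, j - 1\}$ and $\{j, \ldots, n - 1\}$ are disjoint; in $\calC_{\mathrm{VI}}$ the sets $\{i, \ldots, j - 1\}$ and $\{m, \ldots, n - 1\}$ are disjoint because $j < m$. So both equations of the proposition fall under the same umbrella.

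The key technical device is a Leibniz-type property for super-anticommutation: if a homogeneous $x$ satisfies $x y = (-1)^{[x][y]} y x$ and $x z = (-1)^{[x][z]} z x$, then a one-line computation gives $x (y z) = (-1)^{[x]([y] + [z])} (y z) x$ and $x (z y) = (-1)^{[x]([y] + [z])} (z y) x$, so that $x$ super-anticommutes with any expression $y z - c\, z y$, in particular with the $q$-supercommutators $\ldbr y, z \rdbr$ and $\ldbr z, y \rdbr$ through which the Cartan--Weyl generators are defined in (\ref{41b}). I would record this first, since it is precisely what lets super-anticommutation propagate through the recursive construction.

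With these two ingredients I would run a double induction. Fixing a simple generator $E_k$, I would show by induction on $d - c$ that $E_k$ super-anticommutes with $F_{cd}$ for every $k \notin \{c, \ldots, d - 1\}$: the base case $F_{c, \, c + 1} = F_c$ is the defining relation (since $k \ne c$), and the step writes $F_{c, \, d + 1} = \ldbr F_{d, \, d + 1}, F_{cd} \rdbr$ and applies the Leibniz property, using $k \ne d$ together with the inductive hypothesis. Then, fixing $F_{cd}$, I would show by induction on $b - a$ that $E_{ab}$ super-anticommutes with $F_{cd}$ whenever $\{a, \ldots, b - 1\}$ and $\{c, \ldots, d - 1\}$ are disjoint, starting from $E_{a, \, a + 1} = E_a$ (handled by the first induction) and passing from $E_{ab}$ to $E_{a, \, b + 1} = \ldbr E_{ab}, E_{b, \, b + 1} \rdbr$ again via the Leibniz property.

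Applied to the disjoint ranges identified above, this yields $\ldbr E_{ij}, F_{jn} \rdbr = 0$ and $\ldbr E_{jn}, F_{ij} \rdbr = 0$ in $\calC_{\mathrm V}$, and $\ldbr E_{ij}, F_{mn} \rdbr = 0$, $\ldbr E_{mn}, F_{ij} \rdbr = 0$ in $\calC_{\mathrm{VI}}$. Since table \ref{t:1} records $([i] + [j])([m] + [n]) = \oz$ in both branches, the sign $(-1)^{[E][F]}$ equals $1$ and the super-anticommutators coincide with the ordinary commutators written in the statement. Everything is routine; I expect the only real care to be demanded by the sign and $q$-factor bookkeeping in the Leibniz step and by maintaining the disjointness hypothesis as the index ranges grow through the induction.
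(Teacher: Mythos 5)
Your proposal is correct and takes essentially the same route as the paper, whose entire proof is the one-line observation that the claim is a direct consequence of the defining relation (\ref{glrc}); your double induction with the Leibniz-type propagation of super-anticommutation through the recursion (\ref{41b}) is exactly the detailed justification of that observation, including the correct verification that the sign $(-1)^{([i]+[j])([m]+[n])}$ is trivial in branches $\calC_{\mathrm V}$ and $\calC_{\mathrm{VI}}$.
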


\begin{proof}
The statement of the proposition is a direct consequence of the defining relation (\ref{glrc}).
\end{proof}


\begin{proposition} \label{p:4.9}
For any $((i, \, j), \, (m, \, n)) \in \calC_{\mathrm{II}}$ one has
\begin{align}
& \ldbr E_{i j}, \, F_{m n} \rdbr = E_{i j} F_{m n} - (-1)^{[m] + [n]} F_{m n} E_{i j} = 0,  \label{e39a} \\ 
& \ldbr E_{m n}, \, F_{i j} \rdbr = E_{m n} F_{i j} - (-1)^{[m] + [n]} F_{i j} E_{m n} = 0. \label{e39b}
\end{align}
\end{proposition}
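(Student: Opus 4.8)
The plan is to prove Proposition \ref{p:4.9} by reducing the general case $((i,\,j),\,(m,\,n)) \in \calC_{\mathrm{II}}$ (which means $i < m < n < j$) to the base case $((i,\,j),\,(m,\,m+1))$ and then building $E_{m n}$ up from the simple root vector $E_{m,\,m+1} = E_m$.  The key structural fact I would exploit is that $E_{i j}$ is a nested $q$-supercommutator of simple root vectors $E_i, E_{i+1}, \ldots, E_{j-1}$, all of which commute with $F_m = F_{m,\,m+1}$ except possibly $F_m$ itself, and the single genuinely nontrivial interaction is governed by the defining relation (\ref{glrc}).  Since $i < m < n < j$, the index $m$ lies strictly inside the range $\{i, \ldots, j-1\}$, so $E_{i j}$ does ``see'' the generator $E_m$, and the whole difficulty is to show that this interaction cancels.

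\emph{Reducing to the base case.}  First I would establish the base case
\begin{equation*}
\ldbr E_{i j}, \, F_{m,\,m+1} \rdbr = 0, \qquad i < m < m+1 < j,
\end{equation*}
by downward induction on $i$ and upward induction on $j$, exactly paralleling the inductive scaffolding used in the proof of Proposition \ref{p:4.4}.  At the innermost level one writes $E_{i j}$ via (\ref{41b}) as a $q$-supercommutator $\ldbr E_{i,\,j-1}, E_{j-1,\,j}\rdbr$ (or peels off $E_{i,\,i+1}$ from the left), and uses the fact that $F_m = F_{m,\,m+1}$ commutes with each simple $E_k$ for $k \ne m, m\pm 1$ up to the sign $(-1)^{[\alpha_k][\alpha_m]}$, while Proposition \ref{p:4.8} and the defining relation (\ref{glrc}) control the remaining terms.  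Concretely, I would verify the minimal configuration $\ldbr E_{m-1,\,m+2}, F_{m,\,m+1}\rdbr = 0$ directly (this is where (\ref{glrc}) produces a Cartan term that must vanish because the two flanking root vectors $E_{m-1,\,m}$ and $E_{m+1,\,m+2}$ carry it off to zero), and then propagate outward.

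\emph{Building up the second index.}  Once the base case is in hand, I would induct on $n$ to raise $E_{m,\,m+1}$ to $E_{m n}$.  Assuming $\ldbr E_{i j}, F_{m n}\rdbr = 0$ for some $n$ with $m < n < j$, I expand $F_{m,\,n+1} = \ldbr F_{n,\,n+1}, F_{m n}\rdbr = F_{n,\,n+1} F_{m n} - q_n^{-1} F_{m n} F_{n,\,n+1}$ and push $E_{i j}$ through.  Because $i < n < j$ in this regime, $E_{i j}$ commutes with $F_{n,\,n+1}$ up to the appropriate sign by the already-established $\calC_{\mathrm{II}}$ base relation (or by Proposition \ref{p:4.8} when the indices fall into $\calC_{\mathrm{V}}$ or $\calC_{\mathrm{VI}}$), so each term reorders cleanly and the induction closes, giving (\ref{e39a}).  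Equation (\ref{e39b}) follows by the symmetric argument with the roles of $E$ and $F$ interchanged.

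\emph{The main obstacle.}  I expect the genuinely delicate step to be the verification of the minimal mixed relation $\ldbr E_{m-1,\,m+2}, F_{m,\,m+1}\rdbr = 0$, because this is the one place where $E$- and $F$-type generators with overlapping support meet and the defining relation (\ref{glrc}) injects a Cartan-subalgebra term $q^{d_m K_m - d_{m+1}K_{m+1}}$.  One must show that commuting this group-like element past the surviving $E$'s produces exactly cancelling $q$-factors, and here the sign bookkeeping $(-1)^{[m]+[m+1]}$ together with the identity (\ref{kappaqi}) and the bilinear-form values (\ref{bfa})--(\ref{bfb}) must conspire precisely.  The superalgebra case $m = M$ demands separate care, since then $[E_M] = \oo$ and $E_M^2 = 0$, so the Jacobi-type rearrangements that are automatic in the purely even setting need to be checked against the $q$-super Serre relations (\ref{glsre}).
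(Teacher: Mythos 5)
Your proposal follows essentially the same route as the paper's own proof: the explicit verification of the minimal configuration $\ldbr E_{m-1,\,m+2},\,F_{m,\,m+1}\rdbr = 0$ via the Cartan term from (\ref{glrc}) and Proposition \ref{p:4.1}, the downward/upward induction on $i$ and $j$ using Proposition \ref{p:4.8}, the induction on $n$ through the expansion of $F_{m,\,n+1}$, and the symmetric treatment of (\ref{e39b}). No substantive differences to report.
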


\begin{proof}
Let $1 < k \le M + N - 2$. Prove equation (\ref{e39a}) for $i = k - 1$, $m = k$, $n = k + 1$ and $j = k + 2$. We have
\begin{equation*}
E_{k - 1, \, k + 2} = E_{k -1} E_k E_{k + 1} - q_{k + 1} E_{k - 1} E_{k + 1} E_k - q_k E_{k + 1} E_{k + 1} E_{k - 1} + q_k q_{k + 1} E_{k + 1} E_k E_{k - 1}.
\end{equation*}
It follows that
\begin{multline*}
\ldbr E_{i j}, \, F_{m n} \rdbr = E_{k - 1}\ldbr E_k, \, F_k \rdbr E_{k + 1} - q_{k + 1} E_{k - 1} E_{k + 1} \ldbr E_k, \, F_k \rdbr \\*
- q_k \ldbr E_k, \, F_k \rdbr E_{k + 1} E_{k - 1} + q_k q_{k + 1} E_{k + 1} \ldbr E_k, \, F_k \rdbr E_{k - 1}.
\end{multline*}
Using the defining relation (\ref{glrc}) and proposition \ref{p:4.1}, we obtain
\begin{equation*}
\ldbr E_{k - 1, \, k + 2}, \, F_{k, \, k + 1} \rdbr = 0.
\end{equation*}
Now, let $1 < i < k < j - 1 \le M + N - 1$ and
\begin{equation}
\ldbr E_{i j}, \, F_{k, \, k + 1} \rdbr = E_{i j} F_{k, \, k + 1} - (-1)^{[k] + [k + 1]} F_{k, \, k + 1} E_{i j} = 0.  \label{e39c}
\end{equation}
We have
\begin{multline*}
\ldbr E_{i - 1, \, j}, \, F_{k, \, k + 1} \rdbr = \ldbr E_{i - 1, \, i} E_{i j} - q_i E_{i j} E_{i - 1, \, i} , \, F_{k, \, k + 1} \rdbr = E_{i - 1, i} E_{i j} F_{k, \, k + 1} \\- (-1)^{[k] + [k + 1]} F_{k, \, k + 1} E_{i - 1, \, i} E_{i j} - q_j (E_{i j} E_{i - 1, \, i}F_{k, \, k + 1} - (-1)^{[k] + [k + 1]} F_{k, \, k + 1} E_{i j} E_{i - 1, \, i}).
\end{multline*}
It follows from proposition \ref{p:4.8} that
\begin{equation*}
\ldbr E_{i - 1, \, j}, \, F_{k, \, k + 1} \rdbr = 0.
\end{equation*}
Further, let $1 \le i < k < j - 1 \le M + N - 2$ and equation (\ref{e39c}) is true. We obtain
\begin{multline*}
\ldbr E_{i, \, j + 1}, \, F_{k, \, k + 1} \rdbr = \ldbr E_{i j} E_{j, \, j + 1} - q_j E_{j, \, j + 1} E_{i j} , \, F_{k, \, k + 1} \rdbr = E_{i j} E_{j, \, j + 1} F_{k, \, k + 1} \\- (-1)^{[k] + [k + 1]} F_{k, \, k + 1} E_{i j} E_{j, \, j + 1} - (E_{j, j + 1} E_{i j} F_{k, \, k + 1} - (-1)^{[k] + [k + 1]} q_j F_{k, \, k + 1} E_{j, \, j + 1} E_{i j}),
\end{multline*}
and proposition \ref{p:4.8} implies that
\begin{equation*}
\ldbr E_{i, \, j + 1}, \, F_{k, \, k + 1} \rdbr = 0.
\end{equation*}
Hence, we have
\begin{equation*}
\ldbr E_{i j}, \, F_{k, \, k + 1} \rdbr = 0
\end{equation*}
for all possible $i$, $j$ and $k$. Assume now that for some $1 \le i < m < n < M + N - 1$ we have
\begin{equation*}
\ldbr E_{i j}, \, F_{m n} \rdbr = E_{i j} E_{m n} - (-1)^{[m] + [n]} E_{m n} E_{i j} = 0.
\end{equation*}
Then, we obtain
\begin{multline*}
\ldbr E_{i j}, \, F_{m, \, n + 1} \rdbr = \ldbr E_{i j}, \, F_{n, \, n + 1} F_{m n} - q^{-1}_n F_{m n} F_{n, \, n + 1} \rdbr = E_{i j} F_{n, \, n + 1} F_{m n} \\*
- (-1)^{[m] + [n + 1]} F_{n, \, n + 1} F_{n m} E_{i j} - q_n^{-1} (E_{i j} F_{m n} F_{n, \, n + 1} - (-1)^{[m] + [n + 1]} F_{m n} F_{n, \, n + 1} E_{i j}) = 0.
\end{multline*}
Thus, equation (\ref{e39a}) is true. In the same way one can prove equation (\ref{e39b}).
\end{proof}


\begin{proposition} \label{p:4.10}
For any $((i, \, j), \, (m, \, n)) \in \calC_{\mathrm{I}}$ one has
\begin{align}
& \ldbr E_{i j}, \, F_{i n} \rdbr = E_{i j} F_{i n} - (-1)^{[i] + [j]} F_{i n} E_{i j} = - (-1)^{[i] + [j]} q^{- d_i K_i + d_j K_j} F_{j n}^{}, \label{e310a} \\
& \ldbr E_{i n}, \, F_{i j} \rdbr = E_{i n} F_{i j} - (-1)^{[i] + [j]} F_{i j} E_{i n} = - (-1)^{[i] + [j]} E_{j n} \, q^{d_i K_i - d_j K_j}. \label{e310b}
\end{align}
For any $((i, \, j), \, (m, \, n)) \in \calC_{\mathrm{III}}$ one has
\begin{align}
& \ldbr E_{i j}, \, F_{m n} \rdbr = E_{i j} F_{m j} - (-1)^{[m] + [j]} F_{m j} E_{i j} = q^{ - d_m K_m + d_j K_j} E_{i m}, \label{e310c} \\
& \ldbr E_{m n}, \, F_{i j} \rdbr = E_{m j} F_{i j} - (-1)^{[m] + [j]} F_{i j} E_{m j} = F_{i m} q^{d_m K_m - d_j K_j}. \label{e310d}
\end{align}
\end{proposition}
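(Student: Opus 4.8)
The plan is to prove both families of relations by induction built on the recursive definitions (\ref{41b}), reducing everything to the single defining relation (\ref{glrc}) for $\ldbr E_i, F_i\rdbr$ and transporting the Cartan factors through the computation with Proposition \ref{p:4.1}. I will treat the case $\calC_{\mathrm{I}}$ in detail; the case $\calC_{\mathrm{III}}$ is completely analogous, with the roles of $E$ and $F$ and of the two indices interchanged, so I indicate only its starting point.

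First I would establish the base case $j = i + 1$ of (\ref{e310a}), namely $\ldbr E_i, F_{in}\rdbr = -(-1)^{[i]+[i+1]} q^{-d_i K_i + d_{i+1} K_{i+1}} F_{i+1, n}$ for all $n > i + 1$, by a secondary induction on $n$. For $n = i + 2$ I would expand $F_{i, i+2} = F_{i+1} F_i - q_{i+1}^{-1} F_i F_{i+1}$ from (\ref{41b}) and compute $\ldbr E_i, F_{i,i+2}\rdbr$ directly. The key simplifications are that $E_i$ and $F_{i+1}$ super-commute by (\ref{glrc}), and that the relevant sign $(-1)^{([i]+[i+1])([i+1]+[i+2])}$ is in fact trivial, since among three consecutive indices at most one can straddle the distinguished position $M$; consequently all terms carrying no Cartan element cancel, and the surviving terms reassemble, via (\ref{kappaqi}) and Proposition \ref{p:4.1}, into exactly the stated multiple of $q^{-d_i K_i + d_{i+1} K_{i+1}} F_{i+1, i+2}$. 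For the inductive step I would write $F_{i, n+1} = F_{n, n+1} F_{in} - q_n^{-1} F_{in} F_{n, n+1}$, use that $\ldbr E_i, F_{n,n+1}\rdbr = 0$ by (\ref{glrc}) since $i < n$, pull $E_i$ through onto $F_{in}$, apply the induction hypothesis, and recombine the two resulting terms into $F_{i+1, n+1}$ through (\ref{41b}); Proposition \ref{p:4.1} shows that the Cartan factor commutes with $F_{n, n+1}$ because $i, i+1 \notin \{n, n+1\}$, so it passes through untouched.

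With the base case in hand I would run the main induction on $j$. For $i + 1 < j < n$ I write $E_{ij} = E_{i, j-1} E_{j-1, j} - q_{j-1} E_{j-1, j} E_{i, j-1}$ from (\ref{41b}) and expand $\ldbr E_{ij}, F_{in}\rdbr$. Since $i < j - 1 < j < n$, Proposition \ref{p:4.9} (equation (\ref{e39b})) gives $\ldbr E_{j-1, j}, F_{in}\rdbr = 0$, which lets me move $E_{j-1, j}$ past $F_{in}$ at the cost of a sign; the $E_{i,j-1}E_{j-1,j}\cdots$ and $\cdots E_{j-1,j}E_{i,j-1}$ contributions then cancel, leaving $(-1)^{[j-1]+[j]}G\,E_{j-1,j} - q_{j-1}E_{j-1,j}\,G$, where $G = \ldbr E_{i, j-1}, F_{in}\rdbr$ is supplied by the induction hypothesis. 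Commuting the Cartan factor of $G$ past $E_{j-1, j}$ by Proposition \ref{p:4.1} rewrites this expression as $-(-1)^{[i]+[j-1]}q^{-d_iK_i+d_{j-1}K_{j-1}}\,\ldbr E_{j-1,j}, F_{j-1,n}\rdbr$, and $\ldbr E_{j-1,j}, F_{j-1,n}\rdbr$ is precisely the already-proved $\calC_{\mathrm{I}}$ base case for the simple root vector $E_{j-1, j} = E_{j-1}$. Substituting it, the two Cartan exponents telescope (the $K_{j-1}$ contributions cancel into $q^{-d_iK_i+d_jK_j}$), $F_{j-1,n}$ becomes $F_{jn}$, and the signs collapse to $-(-1)^{[i]+[j]}$, yielding (\ref{e310a}). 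Equation (\ref{e310b}) follows the same pattern, and the $\calC_{\mathrm{III}}$ equations (\ref{e310c}), (\ref{e310d}) are obtained by the mirror-image induction whose base case is $m = j - 1$, i.e. $\ldbr E_{ij}, F_{j-1}\rdbr$, built up by inducting on $i$.

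The step I expect to be the main obstacle is not the inductive architecture, which is dictated by (\ref{41b}), but the exact bookkeeping of the signs $(-1)^{[\cdot]}$ and the powers of $q_i = q^{d_i}$. At each stage one must verify that the terms carrying no Cartan element cancel, that the surviving Cartan factor commutes trivially with the $F$'s being reassembled (which hinges on the index separation checked through Proposition \ref{p:4.1}), and that the leftover powers of $q_i$ combine, using (\ref{kappaqi}), into exactly $-(-1)^{[i]+[j]} q^{-d_i K_i + d_j K_j}$ rather than into some spurious $q_i$-dependent multiple of it.
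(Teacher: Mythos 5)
Your proposal is correct and its main induction on $j$ is exactly the paper's argument: decompose the longer root vector through (\ref{41b}), kill the non-Cartan terms using the $\calC_{\mathrm{II}}$ vanishing of Proposition \ref{p:4.9} (which you correctly identify as the needed ingredient; the paper's text cites Proposition \ref{p:4.4} there, apparently a slip), apply the induction hypothesis, push the Cartan factor through with Proposition \ref{p:4.1}, and recognize the base case $\ldbr E_{j-1,\,j},\,F_{j-1,\,n}\rdbr$. The one place you diverge is the base case $j=i+1$ itself: you build it by a secondary induction on $n$ from the defining recursion $F_{i,\,n+1}=\ldbr F_{n,\,n+1},\,F_{in}\rdbr$, which forces you to check at every step that the sign $(-1)^{([i]+[i+1])([n]+[n+1])}$ is trivial, whereas the paper gets it in one stroke for arbitrary $n$ by using the alternative factorization $F_{in}=F_{i+1,\,n}F_{i,\,i+1}-q_{i+1}^{-1}F_{i,\,i+1}F_{i+1,\,n}$ of Proposition \ref{p:4.3} together with the exact commutation $\ldbr E_{i,\,i+1},\,F_{i+1,\,n}\rdbr=0$ of Proposition \ref{p:4.8}; both routes work, but the paper's avoids the secondary induction entirely. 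One bookkeeping remark: your displayed intermediate expression $-(-1)^{[i]+[j-1]}q^{-d_iK_i+d_{j-1}K_{j-1}}\ldbr E_{j-1,\,j},\,F_{j-1,\,n}\rdbr$ carries one minus sign too many --- the combination $(-1)^{[j-1]+[j]}G\,E_{j-1,\,j}-q_{j-1}E_{j-1,\,j}G$ evaluates to $+(-1)^{[i]+[j-1]}q^{-d_iK_i+d_{j-1}K_{j-1}}\ldbr E_{j-1,\,j},\,F_{j-1,\,n}\rdbr$ --- which is what then produces the correct overall factor $-(-1)^{[i]+[j]}$ in (\ref{e310a}) that you state; this is a typo in the sketch rather than a gap.
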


\begin{proof}
We first prove equation (\ref{e310a}) for $j = i + 1$. Using proposition \ref{p:4.3}, we obtain
\begin{multline*}
\ldbr E_{i, \, i + 1}, \, F_{i n} \rdbr = \ldbr E_{i, \, i + 1}, \, F_{i + 1, \, n} F_{i, \, i + 1} - q_{i + 1}^{-1} F_{i, \, i + 1} F_{i + 1, \, n} \rdbr \\
= E_{i, \, i + 1} F_{i + 1, \, n} F_{i, \, i + 1} - (-1)^{[i] + [i + 1]} F_{i + 1, \, n} F_{i, \, i + 1} E_{i, \, i + 1} \\- q_{i + 1}^{-1} (E_{i, \, i + 1} F_{i, \, i + 1} F_{i + 1, \, n} - (-1)^{[i] + [i + 1]} F_{i, \, i + 1} F_{i + 1, \, n} E_{i, \, i + 1}).
\end{multline*}
Further, proposition \ref{p:4.8} gives
\begin{multline*}
\ldbr E_{i, \, i + 1}, \, F_{i n} \rdbr = F_{i + 1, \, n} \ldbr E_i, \, F_i \rdbr - q_{i + 1}^{-1} \ldbr E_i, F_i \rdbr F_{i + 1, \, n} \\*
= (q_i - q_i^{-1})^{-1} (F_{i + 1, \, n} (q^{d_i K_i - d_{i + 1} K_{i + 1}} - q^{- d_i K_i + d_{i + 1} K_{i + 1}}) \\ 
- q_{i + 1}^{-1} (q^{d_i K_i - d_{i + 1} K_{i + 1}} - q^{- d_i K_i + d_{i + 1} K_{i + 1}}) F_{i + 1, \, n}).
\end{multline*}
and, using proposition \ref{p:4.1}, we come to
\begin{equation*}
\ldbr E_{i, \, i + 1}, \, F_{i n} \rdbr = - (q_{i + 1} - q_{i + 1}^{-1})(q_i - q_i^{-1})^{-1} q^{- d_i K_i + d_{i + 1} K_{i + 1}} F_{i + 1, \, n} .
\end{equation*}
Finally, it follows from (\ref{kappaqi}) that
\begin{equation*}
\ldbr E_{i, \, i + 1}, \, F_{i n} \rdbr = - (-1)^{- [i] + [i + 1]}  q^{- d_i K_i + d_{i + 1} K_{i + 1}} F_{i + 1, \, n}.
\end{equation*}
Now, let $1 \le i < j < n - 1 \le M + N -1$ and equation
\begin{equation}
\ldbr E_{i j}, \, F_{i n} \rdbr = - (-1)^{[i] + [j]} q^{- d_i K_i + d_j K_j} F_{j n} \label{e310e}
\end{equation}
be true. Using proposition \ref{p:4.3}, we obtain
\begin{multline*}
\ldbr E_{i, \, j + 1}, \, F_{i n} \rdbr = \ldbr E_{i j} E_{j, \, j + 1} - q_j E_{j, \, j + 1} E_{i j}, \, F_{i n} \rdbr \\
= E_{i j} E_{j, \, j + 1} F_{i n} - (-1)^{[i] + [j + 1]} F_{i n} E_{i j} E_{j, \, j + 1} \\
- q_j(E_{j, \, j + 1} E_{i j} F_{i n} - (-1)^{[i] + [j + 1]} F_{i n} E_{j, \, j + 1} E_{i j}). 
\end{multline*}
It follows from proposition \ref{p:4.4}, equation (\ref{e310e}) and proposition \ref{p:4.1} that
\begin{multline*}
\ldbr E_{i, \, j + 1}, \, F_{i n} \rdbr = (-1)^{[j] + [j + 1]} \ldbr E_{i j}, \, F_{i n} \rdbr E_{j, \, j + 1} - q_j E_{j, \, j + 1} \ldbr E_{i j}, \, F_{i n} \rdbr \\
= (-1)^{[i] + [j]} q^{-d_i K_i + d_j K_j} \ldbr E_{j, \, j + 1}, \, F_{j n} \rdbr = - (-1)^{[i] + [j + 1]} q^{- d_i K_i + d_{j + 1} K_{j + 1}} F_{j + 1, \, n}.
\end{multline*}
We see that equation (\ref{e310a}) is always true. In the same way one can prove equations (\ref{e310b}),  (\ref{e310c}) and  (\ref{e310d}).
\end{proof}


\begin{proposition} \label{p:4.11}
For any $1 \le i < j \le M + N$ we have
\begin{equation*}
\ldbr E_{i j}, \, F_{i j} \rdbr = E_{i j} F_{i j} - (-1)^{[i] + [j]} F_{i j} E_{i j} = \frac{q^{d_i K_i - d_j K_j} - q^{- d_i K_i + d_j K_j}}{q_i^{} - q_i^{-1}}.
\end{equation*}
\end{proposition}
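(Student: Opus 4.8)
The plan is to induct on the second index $j$ with $i$ held fixed. The base case $j = i+1$ is immediate: here $E_{i,\,i+1} = E_i$ and $F_{i,\,i+1} = F_i$, and the asserted identity is precisely the defining relation (\ref{glrc}) taken at $i = j$. For the inductive step I would assume the identity for the pair $(i,\,j)$ and establish it for $(i,\,j+1)$. No second induction on $i$ is needed, because the only other input will be the identity for the adjacent pair $(j,\,j+1)$, which is again just (\ref{glrc}).

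For the step I would unfold the upper index through the definition (\ref{41b}), writing $F_{i,\,j+1} = \ldbr F_{j,\,j+1},\,F_{ij}\rdbr = F_{j,\,j+1}F_{ij} - q_j^{-1}F_{ij}F_{j,\,j+1}$, and then expand $\ldbr E_{i,\,j+1},\,F_{i,\,j+1}\rdbr$ by the super-Leibniz rule $\ldbr X,\,YZ\rdbr = \ldbr X,\,Y\rdbr Z + (-1)^{[X][Y]}Y\ldbr X,\,Z\rdbr$. This rule is available because $E_{i,\,j+1}$ paired with any product of negative root vectors falls into the mixed case of the bracket, where it is the ordinary supercommutator. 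The expansion reduces the computation to the two mixed brackets $\ldbr E_{i,\,j+1},\,F_{j,\,j+1}\rdbr$ and $\ldbr E_{i,\,j+1},\,F_{ij}\rdbr$. The first pair lies in branch $\calC_{\mathrm{III}}$ and the second in branch $\calC_{\mathrm{I}}$ of Table \ref{t:1}, so Proposition \ref{p:4.10} evaluates them, through (\ref{e310c}) and (\ref{e310b}), as $q^{-d_j K_j + d_{j+1}K_{j+1}}E_{ij}$ and $-(-1)^{[i]+[j]}E_{j,\,j+1}\,q^{d_i K_i - d_j K_j}$.

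Substituting these turns the right-hand side into a sum of four monomials. I would then use Proposition \ref{p:4.1} to transport each Cartan factor $q^{-d_j K_j + d_{j+1}K_{j+1}}$ and $q^{d_i K_i - d_j K_j}$ to the outside. The $q$-powers gained in the process are exactly those that cancel the explicit $q_j^{\pm 1}$ coming from the definition of $F_{i,\,j+1}$, so that the four monomials regroup as $q^{-d_j K_j + d_{j+1}K_{j+1}}\ldbr E_{ij},\,F_{ij}\rdbr + (-1)^{[i]+[j]}q^{d_i K_i - d_j K_j}\ldbr E_{j,\,j+1},\,F_{j,\,j+1}\rdbr$. The inductive hypothesis and the base case now replace the two remaining brackets by the corresponding right-hand sides of the claim for $(i,\,j)$ and $(j,\,j+1)$. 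A short final computation, writing both fractions over the common denominator $q - q^{-1}$ by means of $q_i - q_i^{-1} = d_i(q - q^{-1})$ from (\ref{kappaqi}) and using $(-1)^{[i]+[j]}d_j = d_i$, cancels the cross term carrying $q^{-d_j K_j + d_{j+1}K_{j+1}}q^{d_i K_i - d_j K_j}$ and collapses the sum to the desired $(q^{d_i K_i - d_{j+1}K_{j+1}} - q^{-d_i K_i + d_{j+1}K_{j+1}})/(q_i - q_i^{-1})$.

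The hard part will be the sign bookkeeping that makes the regrouping exact. The two parity signs produced by the Leibniz rule, namely $(-1)^{([i]+[j+1])([j]+[j+1])}$ and $(-1)^{([i]+[j+1])([i]+[j])}$, must match the signs $(-1)^{[j]+[j+1]}$ and $(-1)^{[i]+[j]}$ that sit inside the brackets $\ldbr E_{j,\,j+1},\,F_{j,\,j+1}\rdbr$ and $\ldbr E_{ij},\,F_{ij}\rdbr$; otherwise spurious $F_{ij}E_{ij}$ or $F_{j,\,j+1}E_{j,\,j+1}$ terms would survive. Verifying these two coincidences is where one uses $a^2 = a$ in $\bbZ_2$ together with the fact that $[i] \le [j] \le [j+1]$ (the parity is nondecreasing in the index), which restricts the triple $([i],\,[j],\,[j+1])$ to only four patterns that can be checked directly. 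The accompanying matching of the $q$-factors from Proposition \ref{p:4.1} and the concluding cancellation via $d_i = (-1)^{[i]+[j]}d_j$ are routine once the signs are under control.
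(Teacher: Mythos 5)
Your proof is correct and is essentially the paper's own argument in mirror image: the paper inducts on $j-i$ by splitting $F_{ij}$ at the lower end, $F_{ij}=\ldbr F_{i+1,\,j},\,F_{i,\,i+1}\rdbr$, and reducing to $\ldbr E_{i,\,i+1},\,F_{i,\,i+1}\rdbr$ and $\ldbr E_{i+1,\,j},\,F_{i+1,\,j}\rdbr$, whereas you split $F_{i,\,j+1}$ at the upper end and reduce to $\ldbr E_{ij},\,F_{ij}\rdbr$ and $\ldbr E_{j,\,j+1},\,F_{j,\,j+1}\rdbr$. The inputs are identical in both versions (Propositions \ref{p:4.1}, \ref{p:4.3} and \ref{p:4.10}, the relation (\ref{kappaqi}), and the defining relation (\ref{glrc}) as base case), and your sign and $q$-factor bookkeeping checks out.
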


\begin{proof}
The statement of the proposition is certainly true for $j = i + 1$. Let us consider the case when $j - i > 1$. It follows from proposition \ref{p:4.3} that
\begin{multline}
\ldbr E_{i j}, \, F_{i j} \rdbr = \ldbr E_{i j}, \, F_{i + 1, \, j} F_{i, \, i + 1} - q^{-1}_{i + 1} F_{i, \, i + 1} F_{i + 1, \, j} \rdbr \\
= E_{i j} F_{i + 1, \, j} F_{i, \, i + 1} - (-1)^{[i] + [j]} F_{i + 1, \, j} F_{i, \, i + 1} E_{i j} \\
- q_{i + 1}^{-1} (E_{i j} F_{i, \, i + 1} F_{i + 1, \, j} - (-1)^{[i] + [j]} F_{i, \, i + 1} F_{i + 1, \, j} E_{i j}). \label{e312a}
\end{multline}
Equation (\ref{e310b}) implies
\begin{align*}
& F_{i, \, i + 1} E_{i j} = (-1)^{[i] + [i + 1]} E_{i j} F_{i, \, i + 1} + E_{i + 1, \, j} q^{d_i K_i - d_{i + 1} K_{i + 1}}, \\
& E_{i j} F_{i, \, i + 1} = (-1)^{[i] + [i + 1]} F_{i, \, i + 1} E_{i j} - (-1)^{[i] + [i + 1]} E_{i + 1, \, j} q^{d_i K_i - d_{i + 1} K_{i + 1}}.
\end{align*}
Using these equations in (\ref{e312a}), we obtain
\begin{multline}
\ldbr E_{i j}, \, F_{i j} \rdbr = \ldbr E_{i j}, \, F_{i + 1, \, j} \rdbr F_{i, \, i + 1} +  q_{i + 1}^{-1} (-1)^{[i] + [ i + 1]} F_{i, \, i + 1} \ldbr E_{i j}, \, F_{i + 1, \, j} \rdbr \\
+ q_{i + 1}^{-1} (-1)^{[i] + [i + 1]} E_{i + 1, \, j} q^{d_i K_i - d_{i + 1} K_{i + 1}} F_{i + 1, \, j} \\ - (-1)^{[i] + [j]} F_{i + 1, \, j} E_{i + 1, j} q^{d_i K_i - d_{i + 1} K_{i + 1}}.
\end{multline}
We have
\begin{equation*}
\ldbr E_{i j}, \, F_{i + 1, \, j} \rdbr = q^{- d_{i + 1} K_{i + 1} + d_j K_j} E_{i, \, i + 1},
\end{equation*}
see proposition \ref{p:4.11}. Taking this equation, and proposition \ref{p:4.1} and equation (\ref{kappaqi}) into account, we come to the equation
\begin{multline*}
\ldbr E_{i j}, \, F_{i j} \rdbr = q^{-d_{i + 1} K_{i + 1} + d_j K_j} \ldbr E_{i, \, i + 1}, \, F_{i, \, i + 1} \rdbr \\+ (-1)^{[i] + [i + 1]} \ldbr E_{i + 1, \, j}, \, F_{i + 1, \, j} \rdbr q^{d_i K_i - d_{i + 1} K_{i + 1}} = \frac{q^{d_i K_i - d_j K_j} - q^{- d_i K_i + d_j K_j}}{q_i^{} - q^{-1}_i}.
\end{multline*}
That was to be proved.
\end{proof}


\begin{proposition} \label{p:4.12}
For any $((i, \, j), \, (m, \, n)) \in \calC_{\mathrm{IV}}$ one has
\begin{align*}
& \ldbr E_{i j}, \, F_{m n} \rdbr = E_{i j} F_{m n} - (-1)^{[m] + [j]} F_{m n} E_{i j} = -(q_j^{} - q_j^{-1}) q^{-d_m K_m + d_j K_j} F_{j n}^{} E_{i m}^{}, \\
& \ldbr E_{m n}, \, F_{i j} \rdbr = E_{m n} F_{i j} - (-1)^{[m] + [j]} F_{i j} E_{m n} = (q_j^{} - q_j^{-1})F_{i m}^{} E_{j n}^{}  q^{d_m K_m - d_j K_j} .
\end{align*}
\end{proposition}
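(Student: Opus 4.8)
We need to handle case $\calC_{\mathrm{IV}}$, where $i < m < j < n$. The defining feature here is that $E_{ij}$ and $F_{mn}$ overlap in the "middle" index range, and unlike the pure-$E$ or pure-$F$ computations, we now cross an $E$ with an $F$. The key structural observation is that in $\calC_{\mathrm{IV}}$ the root vector $E_{ij}$ can be factored through the index $m$ via Proposition~\ref{p:4.3}, namely $E_{ij} = E_{im}E_{mj} - q_m E_{mj}E_{im}$, and similarly $F_{mn}$ can be factored through $j$ as $F_{mn} = F_{jn}F_{mj} - q_j^{-1} F_{mj}F_{jn}$. This reduces the mixed $q$-supercommutator to one involving only the "diagonal" pieces $E_{mj}, F_{mj}$ (which pair via Proposition~\ref{p:4.11}) together with the outer factors $E_{im}, F_{jn}$, which commute or $q$-commute with the inner data by Propositions~\ref{p:4.8}, \ref{p:4.9}, \ref{p:4.10}.

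\textbf{Plan.} First I would expand $\ldbr E_{ij},\,F_{mn}\rdbr$ by substituting the factorization $E_{ij}=\ldbr E_{im},E_{mj}\rdbr$ from Proposition~\ref{p:4.3}, writing the $q$-supercommutator explicitly. Since $E_{im}$ corresponds to the root $\alpha_{im}$ with $i<m$, and $F_{mn}$ corresponds to $-\alpha_{mn}$ with $m<j<n$, the pair $(E_{im},F_{mn})$ falls under a case where they supercommute up to a sign — this should follow from Proposition~\ref{p:4.9} (or directly from the fact that $\alpha_{im}$ and $\alpha_{mn}$ share only the boundary index $m$, so the relevant mixed relation is known). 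The effect is that $E_{im}$ can be pulled through $F_{mn}$, leaving a commutator of $E_{mj}$ with $F_{mn}$. Next I would evaluate $\ldbr E_{mj},\,F_{mn}\rdbr$: here $(m,j)$ and $(m,n)$ share their left index $m$ with $j<n$, which is exactly the configuration of $\calC_{\mathrm{I}}$, so Proposition~\ref{p:4.10}, equation~(\ref{e310a}), applies and produces a Cartan factor $q^{-d_mK_m+d_jK_j}$ times $F_{jn}$. Substituting this back, and then using Proposition~\ref{p:4.1} to move the resulting $q^{K}$ past the remaining $E_{im}$, should assemble the right-hand side.

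\textbf{Main obstacle.} The delicate part will be bookkeeping the signs $(-1)^{[\cdot]+[\cdot]}$ and the $q$-power prefactors: the $q$-supercommutator definition changes form depending on whether the grading elements are positive or negative roots, and here we genuinely mix a positive root ($E_{ij}$) with a negative one ($F_{mn}$), so the third defining case (no $q$-prefactor) is in force for the outer bracket while the inner brackets of Propositions~\ref{p:4.3} and \ref{p:4.10} carry their own $q_m, q_j$ factors. I would need to track carefully how the Cartan element $q^{-d_mK_m+d_jK_j}$ emerging from~(\ref{e310a}) commutes past $E_{im}$ — by Proposition~\ref{p:4.1} this contributes a power $q^{\langle\alpha_{im},\,-d_mK_m+d_jK_j\rangle}$, and verifying that this exactly reproduces the stated coefficient $-(q_j-q_j^{-1})$, using relation~(\ref{kappaqi}), is where the computation must be done with care. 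The parallel factor $-(q_j-q_j^{-1})$ rather than $-(q_m-q_m^{-1})$ as in Proposition~\ref{p:4.7} reflects that the pairing here is governed by the index $j$ (through $E_{mj},F_{mj}$), not $m$; keeping this distinction straight is the main source of potential error. The second identity is then proved by the analogous manipulation, factoring $E_{mn}$ and $F_{ij}$ symmetrically and invoking~(\ref{e310b}).
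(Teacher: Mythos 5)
Your plan coincides with the paper's own proof: factor $E_{ij}=\ldbr E_{im},\,E_{mj}\rdbr$ via Proposition~\ref{p:4.3}, pull $E_{im}$ through $F_{mn}$ (which is really the $\calC_{\mathrm V}$ case of Proposition~\ref{p:4.8} rather than Proposition~\ref{p:4.9}), reduce to $\ldbr E_{mj},\,F_{mn}\rdbr$ evaluated by equation~(\ref{e310a}), and finish by moving the Cartan factor past $E_{im}$ with Proposition~\ref{p:4.1}, commuting $F_{jn}$ with $E_{im}$ via Proposition~\ref{p:4.8}, and converting $q_m$ to $q_j$ with~(\ref{kappaqi}). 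This is essentially identical to the argument given in the paper, so no further comparison is needed.
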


\begin{proof} It follows from propositions \ref{p:4.3} and \ref{p:4.9} that
\begin{multline*}
\ldbr E_{i j}, \, F_{m n} \rdbr = \ldbr E_{i m} E_{m j} - q_m E_{m j} E_{i m}, \, F_{m n} \rdbr 
\\= E_{i m} E_{m j} F_{m n} - (-1)^{[m] + [j]} F_{m n} E_{i m} E_{m j} - q_m (E_{m j} E_{i m} F_{m n} - (-1)^{[m] + [j]} F_{m n} E_{m j} E_{i m}) \\= E_{i m} \ldbr E_{m j}, \, F_{m n} \rdbr - q_m \ldbr E_{m j}, \, F_{m n} \rdbr E_{i m}.
\end{multline*}
Now, using equation (\ref{e310a}), proposition \ref{p:4.1} and proposition \ref{p:4.8}, we get
\begin{equation*}
\ldbr E_{i j}, \, F_{m n} \rdbr = (-1)^{[m] + [j]} (q_m^{} - q_m^{-1}) q^{- d_m K_m + d_j K_j} F_{j n} E_{i m}.
\end{equation*}
Now, taking into account equations (\ref{kappaqi}), we see that the first equation of the proposition is true. The second equation can be proved similarly.
\end{proof}

One can get convinced that the propositions \ref{p:4.1}--\ref{p:4.12} allow us to reduce any monomial on the Poincar\'e--Birkhoff--Witt generators to the ordered form (\ref{41c}).

\section{Conclusions}

We have derived the commutation relations for the Poincar\'e--Birkhoff--Witt generators of the quantum algebra $\uqglmn$. Our results do not fully coincide with the results of the papers \cite{Zha93, Tsu14, Tsu19}. We are planning to use the obtained relations for constructing of $q$-oscillator representations of the positive Borel subalgebra of the quantum superalgebra $\uqglmn$.

\subsection*{Acknowledgments}

This work was supported in part by the RFBR grant \#~20-51-12005.

\bibliographystyle{amsrusunsrt}
\bibliography{IntegrableSystems}

\providecommand{\bysame}{\leavevmode\hbox to3em{\hrulefill}\thinspace}
\providecommand{\href}[2]{#2}

\begin{thebibliography}{10}

\bibitem{Dri85}

V.~G. Drinfeld, \emph{{H}opf algebras and the quantum {Y}ang-–{B}axter
  equation (in {R}ussian)}, Dokl. Akad. Nauk SSSR \textbf{283} (1985),
  1060--1064.

\bibitem{Jim85}

M.~Jimbo, \emph{A $q$-difference analogue of {$\mathrm U(\mathfrak g)$} and the
  {Y}ang-{B}axter equation}, \href{http://dx.doi.org/10.1007/BF00704588}{Lett.
  Math. Phys.} \textbf{10} (1985), 63--69.

\bibitem{BazLukZam99}

V.~V. Bazhanov, S.~L. Lukyanov, and A.~B. Zamolodchikov, \emph{Integrable
  structure of conformal field theory {III}. {T}he {Y}ang--{B}axter relation},
  \href{http://dx.doi.org/10.1007/s002200050531}{Commun. Math. Phys.}
  \textbf{200} (1999), 297--324,
  \href{http://arxiv.org/abs/hep-th/9805008}{{\tt arXiv:hep-th/9805008}}.

\bibitem{BooGoeKluNirRaz13}

H.~Boos, F.~G{\"o}hmann, A.~Kl{\"u}mper, Kh.~S. Nirov, and A.~V. Razumov,
  \emph{Universal integrability objects},
  \href{http://dx.doi.org/10.1007/s11232-013-0002-8}{Theor. Math. Phys.}
  \textbf{174} (2013), 21--39, \href{http://arxiv.org/abs/1205.4399}{{\tt
  arXiv:1205.4399 [math-ph]}}.

\bibitem{BooGoeKluNirRaz14a}

H.~Boos, F.~G{\"o}hmann, A.~Kl\"umper, Kh.~S. Nirov, and A.~V. Razumov,
  \emph{Universal ${R}$-matrix and functional relations},
  \href{http://dx.doi.org/10.1142/S0129055X14300052}{Rev. Math. Phys.}
  \textbf{26} (2014), 1430005 (66pp),
  \href{http://arxiv.org/abs/1205.1631}{{\tt arXiv:1205.1631 [math-ph]}}.

\bibitem{NirRaz16a}

Kh.~S. Nirov and A.~V. Razumov, \emph{Quantum groups and functional relations
  for lower rank}, \href{http://dx.doi.org/10.1016/j.geomphys.2016.10.014}{J.
  Geom. Phys.} \textbf{112} (2017), 1--28,
  \href{http://arxiv.org/abs/1412.7342}{{\tt arXiv:1412.7342 [math-ph]}}.

\bibitem{Raz21}

A.~V. Razumov, \emph{$\ell$-weights and factorization of transfer operators},
  \href{http://dx.doi.org/10.1134/S0040577921080092}{Theor. Math. Phys.}
  \textbf{208} (2021), 1116--1143, \href{http://arxiv.org/abs/2103.16200}{{\tt
  arXiv:2103.16200 [math-ph]}}.

\bibitem{BazHibKho02}

V.~V. Bazhanov, A.~N. Hibberd, and S.~M. Khoroshkin, \emph{Integrable structure
  of {$\mathcal W_3$} conformal field theory, quantum {B}oussinesq theory and
  boundary affine {T}oda theory},
  \href{http://dx.doi.org/10.1016/S0550-3213(01)00595-8}{Nucl. Phys. B}
  \textbf{622} (2002), 475--574,
  \href{http://arxiv.org/abs/hep-th/0105177}{{\tt arXiv:hep-th/0105177}}.

\bibitem{BooGoeKluNirRaz14b}

H.~Boos, F.~G{\"o}hmann, A.~Kl\"umper, Kh.~S. Nirov, and A.~V. Razumov,
  \emph{Quantum groups and functional relations for higher rank},
  \href{http://dx.doi.org/10.1088/1751-8113/47/27/275201}{J. Phys. A: Math.
  Theor.} \textbf{47} (2014), 275201 (47pp),
  \href{http://arxiv.org/abs/1312.2484}{{\tt arXiv:1312.2484 [math-ph]}}.

\bibitem{Raz21a}

A.~V. Razumov, \emph{Quantum groups and functional relations for arbitrary
  rank}, \href{http://dx.doi.org/10.1016/j.nuclphysb.2021.115517}{Nucl. Phys.
  B} \textbf{971} (2021), 115517 (51pp.),
  \href{http://arxiv.org/abs/2104.12603}{{\tt arXiv:2104.12603 [math-ph]}}.

\bibitem{Koj08}

T.~Kojima, \emph{Baxter's ${Q}$-operator for the ${W}$-algebra ${W_N}$},
  \href{http://dx.doi.org/10.1088/1751-8113/41/35/355206}{J. Phys. A: Math.
  Theor} \textbf{41} (2008), 355206 (16pp),
  \href{http://arxiv.org/abs/0803.3505}{{\tt arXiv:0803.3505 [nlin.SI]}}.

\bibitem{NirRaz17b}

Kh.~S. Nirov and A.~V. Razumov, \emph{Quantum groups, {V}erma modules and
  $q$-oscillators: general linear case},
  \href{http://dx.doi.org/10.1088/1751-8121/aa7808}{J. Phys. A: Math. Theor.}
  \textbf{50} (2017), 305201 (19pp),
  \href{http://arxiv.org/abs/1610.02901}{{\tt arXiv:1610.02901 [math-ph]}}.

\bibitem{BooGoeKluNirRaz16}

H.~Boos, F.~G\"ohmann, A.~Kl\"umper, Kh.~S. Nirov, and A.~V. Razumov,
  \emph{Oscillator versus prefundamental representations},
  \href{http://dx.doi.org/10.1063/1.4966925}{J. Math. Phys.} \textbf{57}
  (2016), 111702 (23pp), \href{http://arxiv.org/abs/1512.04446}{{\tt
  arXiv:1512.04446 [math-ph]}}.

\bibitem{BooGoeKluNirRaz17b}

H.~Boos, F.~G\"ohmann, A.~Kl\"umper, Kh.~S. Nirov, and A.~V. Razumov,
  \emph{Oscillator versus prefundamental representations {II}. {A}rbitrary
  higher ranks}, \href{http://dx.doi.org/10.1063/1.5001336}{J. Math. Phys.}
  \textbf{58} (2017), 093504 (23pp),
  \href{http://arxiv.org/abs/1701.02627}{{\tt arXiv:1701.02627 [math-ph]}}.

\bibitem{Yam89}

H.~Yamane, \emph{A {P}oincar\'e--{B}irkhoff--{W}itt theorem for quantized
  universal enveloping algebras of type {$A_N$}},
  \href{http://dx.doi.org/10.2977/prims/1195173355}{Publ. RIMS. Kyoto Univ.}
  \textbf{25} (1989), 503--520.

\bibitem{Yam94}

H.~Yamane, \emph{Quantized enveloping algebras associated with simple {L}ie
  superalgebras and their universal ${R}$-matrices},
  \href{http://dx.doi.org/10.2977/prims/1195166275}{Publ. RIMS. Kyoto Univ.}
  \textbf{30} (1994), 15--87.

\bibitem{BazTsu08}

V.~V. Bazhanov and Z.~Tsuboi, \emph{Baxter's {Q}-operators for supersymmetric
  spin chains}, \href{http://dx.doi.org/10.1016/j.nuclphysb.2008.06.025}{Nucl.
  Phys. B} \textbf{805} (2008), 451--516,
  \href{http://arxiv.org/abs/0805.4274}{{\tt arXiv:0805.4274 [hep-th]}}.

\bibitem{Zha93}

R.~B. Zhang, \emph{Finite dimensional irreducible representationsof the quantum
  supergroup {$\mathrm U_q(gl(m/n))$}},
  \href{http://dx.doi.org/10.1063/1.530198}{J. Math. Phys.} \textbf{34} (1993),
  1236--1254.

\bibitem{Tsu14}

Z.~Tsuboi, \emph{Asymptotic representations and $q$-oscillator solutions of the
  graded {Yang–-Baxter} equation related to {Baxter} {$Q$}-operators},
  \href{http://dx.doi.org/10.1016/j.nuclphysb.2014.06.017}{Nucl. Phys. B}
  \textbf{886} (2014), 1--30, \href{http://arxiv.org/abs/1205.1471}{{\tt
  arXiv:1205.1471 [math-ph]}}.

\bibitem{Tsu19}

Z.~Tsuboi, \emph{A note on $q$-oscillator realizations of ${U_q(gl(M|N))}$ for
  {B}axter ${Q}$-operators},
  \href{http://dx.doi.org/10.1016/j.nuclphysb.2019.114747}{Nucl. Phys. B}
  \textbf{947} (2019), 114747, \href{http://arxiv.org/abs/1907.07868}{{\tt
  arXiv:1907.07868 [math-ph]}}.

\bibitem{Raz22}

A.~V. Razumov, \emph{{K}horoshkin--{T}olstoy approach for quantum
  superalgebras}, \href{http://arxiv.org/abs/2210.12721}{{\tt arXiv:2210.12721
  [math-ph]}}.

\bibitem{Jim86a}

M.~Jimbo, \emph{A $q$-analogue of {$\mathrm U(\mathfrak{gl}(N + 1))$}, {H}ecke
  algebra, and the {Y}ang--{B}axter equation},
  \href{http://dx.doi.org/10.1007/BF00400222}{Lett. Math. Phys.} \textbf{11}
  (1986), 247--252.

\bibitem{LezSav74}

A.~N. Leznov and M.~V. Saveliev, \emph{A parametrization of compact groups},
  \href{http://dx.doi.org/10.1007/BF01075497}{Funct. Anal. Appl.} \textbf{8}
  (1974), 347--348.

\bibitem{AshSmiTol79}

R.~M. Asherova, Yu.~F. Smirnov, and V.~N. Tolstoy, \emph{Description of a class
  of projection operators for semisimple complex {L}ie algebras},
  \href{http://dx.doi.org/10.1007/BF01140268}{Math. Notes} \textbf{26} (1979),
  499--504.

\bibitem{Tol89}

V.~N. Tolstoy, \emph{Extremal projections for contragredient {L}ie algebras and
  superalgebras of finite growth},
  \href{http://dx.doi.org/10.1070/RM1989v044n01ABEH002023}{Russian Math.
  Surveys} \textbf{44} (1989), 257--258.

\end{thebibliography}

\end{document}